\documentclass[11pt]{article}
\usepackage{ifthen}
\usepackage{amsmath, amssymb}
\usepackage{cite}
\usepackage{url}
\usepackage{appendix}
\usepackage{graphicx}
\usepackage{epstopdf}
\usepackage[usenames,dvipsnames]{color}
\usepackage{algorithm}
\usepackage[noend]{algpseudocode}
\usepackage{multirow}
\usepackage{xspace}
\usepackage{wrapfig}
\usepackage{caption}
\usepackage{xspace}
\usepackage{mdwlist}
\usepackage[framemethod=tikz]{mdframed}
\usepackage{caption}
\usepackage[colorinlistoftodos]{todonotes}
\usepackage{xcolor}

\usepackage{enumitem}
\setitemize{itemsep=2pt, topsep=2pt,parsep=2pt,partopsep=2pt}
\definecolor{darkgreen}{rgb}{0,0.5,0}
\usepackage{hyperref}
\hypersetup{
    unicode=false,          
    colorlinks=true,        
    linkcolor=red,          
    citecolor=darkgreen,        
    filecolor=magenta,      
    urlcolor=cyan           
}
\usepackage[capitalize]{cleveref}

\newcommand{\FullOrShort}{full}

\ifthenelse{\equal{\FullOrShort}{full}}{
	
	  \newcommand{\fullOnly}[1]{#1}
	  \newcommand{\shortOnly}[1]{}
		
		\usepackage[letterpaper,margin=1.00in]{geometry}
		\usepackage{amsthm}
		\newtheorem{theorem}{Theorem}[section]
		\newtheorem{lemma}[theorem]{Lemma}

		\newtheorem{observation}[theorem]{Observation}
		\newtheorem{definition}[theorem]{Definition}
	}{

    \usepackage{times}
	  \newcommand{\fullOnly}[1]{}
	  \newcommand{\shortOnly}[1]{#1}
		
		\newtheorem{observation}[theorem]{Observation}
	}




\algnewcommand\algorithmicswitch{\textbf{switch}}
\algnewcommand\algorithmiccase{\textbf{case}}

\algdef{SE}[SWITCH]{Switch}{EndSwitch}[1]{\algorithmicswitch\ #1\ \algorithmicdo}{\algorithmicend\ \algorithmicswitch}%
\algdef{SE}[CASE]{Case}{EndCase}[1]{\algorithmiccase\ #1}{\algorithmicend\ \algorithmiccase}%
\algtext*{EndSwitch}%
\algtext*{EndCase}%


%

\newcommand{\eps}{\varepsilon}

\newcommand{\poly}{\operatorname{\text{{\rm poly}}}}
\newcommand{\Cost}{\operatorname{\text{{\rm cost}}}}

\DeclareMathAlphabet{\mathbfsf}{\encodingdefault}{\sfdefault}{bx}{n}

\newcommand{\local}{\ensuremath{\mathsf{LOCAL}}\xspace}
\newcommand{\congest}{\ensuremath{\mathsf{CONGEST}}\xspace}
\newcommand{\MST}{\ensuremath{\mathsf{MST}}\xspace}
\newcommand{\MCDS}{\ensuremath{\mathsf{MCDS}}\xspace}

\renewcommand{\paragraph}[1]{\vspace{0.15cm}\noindent {\bf #1}:}



\begin{document}

\date{}

\title{Near-Optimal Distributed Approximation of \\
Minimum-Weight Connected Dominating Set}

\shortOnly{
\author{
\vspace{-12pt}
  Mohsen Ghaffari
}
\institute{MIT\\ \email{ghaffari@mit.edu}}
}
\fullOnly{
\author{
  Mohsen Ghaffari \\
  \small MIT\\
  \texttt{\small ghaffari@mit.edu}
}
}
\maketitle

\shortOnly{\vspace{-15pt}}
\begin{abstract}
This paper presents a near-optimal distributed approximation algorithm for the minimum-weight connected dominating set (\MCDS) problem. We use the standard distributed message passing model called the \congest model in which in each round each node can send $\mathcal{O}(\log n)$ bits to each neighbor. The presented algorithm finds an $\mathcal{O}(\log n)$ approximation in $\tilde{\mathcal{O}}(D+\sqrt{n})$ rounds, where $D$ is the network diameter and $n$ is the number of nodes. 

\hspace{3ex} \MCDS is a classical $\mathsf{NP}$-hard problem and the achieved approximation factor $\mathcal{O}(\log n)$ is known to be optimal up to a constant factor, unless $\mathsf{P}=\mathsf{NP}$. Furthermore, the $\tilde{\mathcal{O}}(D+\sqrt{n})$ round complexity is known to be optimal modulo logarithmic factors (for any approximation), following [Das Sarma et al.---STOC'11].  
\end{abstract}

\section{Introduction and Related Work}
Connected dominating set (CDS) is one of the classical structures studied in graph optimization problems which also has deep roots in networked computation. For instance, CDSs have been used rather extensively in distributed algorithms for wireless networks (see e.g. \cite{chen2002approximating, alzoubi2002message, wu2001calculating, cheng2008connected, das1997routing, min2006improving, blum2005connected, cheng2003polynomial, alzoubi2002new, dai2004extended, wan2002distributed}), typically as a global-connectivity backbone. 

This paper investigates distributed algorithms for approximating \emph{minimum-weight connected dominating set} (\MCDS) while taking \emph{congestion} into account. We first take a closer look at what each of these terms means.

\vspace{-10pt}


\subsection{A Closeup of $\mathbfsf{MCDS}$, in Contrast with $\mathbfsf{MST}$}
Given a graph $G=(V, E)$, a set $S\subseteq V$ is called a \emph{dominating set} if each node $v\notin S$ has a neighbor in $S$, and it is called a \emph{connected dominating set} (CDS) if the subgraph induced by $S$ is connected. \Cref{fig:CDS} shows an example. In the \emph{minim-weight CDS} (\MCDS) problem, each node has a weight and the objective is to find a CDS with the minimum total weight.

The \MCDS problem is often viewed as the node-weighted analogue of the \emph{minimum-weight spanning tree} (\MST) problem. Here, we recap this connection. The natural interpretation of the definition of CDS is that a CDS is a selection of \emph{nodes} that provides \emph{global-connectivity}---that is, any two nodes of the graph are connected via a path that its internal nodes are in the CDS. On the counterpart, a \emph{spanning tree} is a (minimal) selection of \emph{edges} that provides global-connectivity. In both cases, the problem of interest is to minimize the total weight needed for global-connectivity. In one case, each edge has a weight and the problem becomes \MST; in the other, each node has a weight and the problem becomes \MCDS.

\begin{figure}[t]
	\centering
		\includegraphics[width=0.6\textwidth]{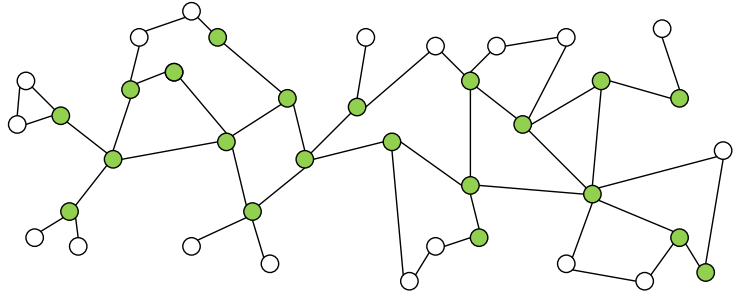}
	\caption{{\scriptsize The green nodes represent a connected dominating set (CDS) of the graph.}}
	\label{fig:CDS}
	\vspace{-15pt}
\end{figure}

Despite the seemingly analogous nature of the two problems, \MCDS turns out to be a significantly harder problem: The \MST problem can be computed sequentially in (almost) $\mathcal{O}(m)$ time, where $m$ is the number of edges. On the other hand, $\MCDS$ is $\mathsf{NP}$-hard \cite{Garey:1990}, and in fact, unless $\mathsf{P}=\mathsf{NP}$, no polynomial time algorithm can find any approximation better than $\Theta(\log n)$-factor for it (see \cite{Feige, Raz-Safra, Alon-Moshkovitiz-Safra}). Furthermore, the known sequential algorithms for $\mathcal{O}(\log n)$ approximation of \MCDS (see \cite{Guha-Khuller, Guha-Khuller-2}) have unspecified polynomial time complexity, which are at least $\Theta(n^3)$. 

\subsection{Congestion in Distributed Algorithms}
Two central issues in distributed computing are \emph{locality} and \emph{congestion}\cite{Peleg:2000}. Classically, locality has received more attention and most graph problems were studied in the \local model, where congestion is ignored and messages can have unbounded size. The recent years have seen a surge in focus on understanding the effect of congestion in graph problems (see e.g., \cite{DasSarma-Walk-2009, DasSarma-Walk-2010, Nanongkai-walk, Danupon-paths, Lenzen-PattShamir, distributed-cut, Frischknecht-Diameter-2012, Holzer-Paths-2012, DasSarma-11}). The standard distributed model that takes congestion into account is called \congest\cite{Peleg:2000}, where in each round, each node can send $B$ bits to each of its neighbors, and normally one assumes $B=\mathcal{O}(\log n)$. \fullOnly{It is well-known that even the easiest of problems in the \local model can become challenging in the \congest model.}
The pioneering problem in the study of the \congest model was \MST: A beautiful line of work shows that \MST can be solved in $\mathcal{O}(D+\sqrt{n}\log^*n)$ rounds\cite{Garay-Kutten-Peleg, Kutten-Peleg} and that this is (existentially) optimal modulo logarithmic factors\cite{DasSarma-11, Elkin-2004, Peleg-Rubinovich-1999}, and a similar lower bound also applies to many other distributed graph problems~\cite{DasSarma-11}\fullOnly{\footnote{For the reader interested in distributed (approximation) algorithms while considering congestion, the author recommends reading \cite{DasSarma-11} and the prior work on that thread, e.g., \cite{Elkin-2004, Peleg-Rubinovich-1999}.}}. Since then, achieving an $\tilde{\mathcal{O}}(D+\sqrt{n})$ round complexity is viewed as sort of a golden standard for (non-local) problems in the \congest model. The area is quite active and in the last couple of years, a few classical graph optimization problems (which are in $\mathsf{P}$) are shown to have approximation matching this standard or getting close to it: some distance-related problems such as shortest-path approximations~\cite{Lenzen-PattShamir, Danupon-paths} or diameter and girth approximations\cite{Holzer-Paths-2012}, and minimum-cut approximation\cite{distributed-cut}.

\vspace{-5pt}
\subsection{Result}
\vspace{-5pt}
The contribution of this paper is to show that in the \congest model, \MCDS can be solved---that is, approximated optimally---in a time close to that of \MST.
\fullOnly{\smallskip}
\begin{mdframed}[hidealllines=true,backgroundcolor=gray!35]
\fullOnly{\vspace{-4pt}}
\shortOnly{\vspace{2pt}}
\begin{theorem}
There is a randomized distributed algorithm in the \congest model that, with high probability, finds an $\mathcal{O}(\log n)$ approximation of the minimum-weight connected dominating set, using $\tilde{\mathcal{O}}(D+\sqrt{n})$ rounds. 
\end{theorem}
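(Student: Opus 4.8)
The plan is to follow the classical two‑phase strategy for \MCDS, as in Guha--Khuller's sequential algorithm: first compute a low‑weight dominating set $S$, then add further nodes to $S$ so that the components of $G[S]$ become connected. Write $\mathrm{OPT}$ for an optimal \MCDS and also for its weight. Each phase loses only an $\mathcal{O}(\log n)$ factor: an optimal \MCDS is in particular a dominating set, so a minimum‑weight dominating set has weight at most $\mathrm{OPT}$ and an $\mathcal{O}(\log n)$‑approximate one has weight $\mathcal{O}(\log n)\cdot\mathrm{OPT}$; and for the second phase, every component of $G[S]$ contains a node that lies in, or is adjacent to, the connected set $\mathrm{OPT}$ (because $\mathrm{OPT}$ dominates $V$), so $G[S\cup\mathrm{OPT}]$ is connected and hence $\mathrm{OPT}\setminus S$ is a feasible set of ``connector'' nodes of weight at most $\mathrm{OPT}$; thus an $\mathcal{O}(\log n)$‑approximate second phase again costs $\mathcal{O}(\log n)\cdot\mathrm{OPT}$, and the two phases together give the claimed factor. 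As a routine preprocessing I would round node weights to powers of two, losing only a constant factor, so that weights are essentially uniform within each weight class.

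For Phase~1 the task is a distributed weighted minimum set cover instance: the ground set is $V$, and each node $v$ offers the set $N[v]$ (its closed neighborhood) at cost $w(v)$. I would run a distributed simulation of a parallel greedy set‑cover algorithm (in the spirit of Rajagopalan--Vazirani, Berger--Rompel--Shor, or the Jia--Rajaraman--Suel dominating‑set algorithm), which produces an $\mathcal{O}(\log n)$‑approximate cover in $\mathrm{poly}(\log n)$ selection phases. Each phase needs only quantities computable from a node's one‑ or two‑hop neighborhood --- essentially how many still‑undominated nodes a candidate would newly cover --- so it runs in $\mathcal{O}(1)$ rounds of \congest with $\mathcal{O}(\log n)$‑bit messages (plus, at worst, $\mathcal{O}(D)$ rounds if a global maximum must be aggregated over a BFS tree). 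Hence Phase~1 fits comfortably inside the $\tilde{\mathcal{O}}(D+\sqrt{n})$ budget, and I expect it to be the easy part.

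Phase~2 is the crux and is where the $\sqrt{n}$ term and the \MST‑like difficulty appear. Contracting each connected component of $G[S]$ to a terminal, ``connecting $S$ at minimum extra weight'' is exactly a node‑weighted Steiner tree problem (equivalently, following Guha--Khuller, a covering problem on the auxiliary graph whose hyperedges are the short connector‑paths between components). Unlike \MST this is not a matroid problem, so naive Bor\r{u}vka‑style merging gives no good approximation; instead I would use the greedy ``spider''/effectiveness‑based algorithm, processed in $\mathcal{O}(\log n)$ effectiveness scales, selecting within each scale a maximal batch of near‑optimal connectors. To bring the round complexity down to $\tilde{\mathcal{O}}(D+\sqrt{n})$ I would import the controlled‑GHS machinery of Kutten--Peleg: keep a global BFS tree of $G$ for all long‑range coordination; run $\tilde{\mathcal{O}}(\sqrt{n})$ localized merging phases --- carried out scale by scale so that the approximation guarantee is preserved --- after which only $\tilde{\mathcal{O}}(\sqrt{n})$ fragments survive; then pipeline the $\tilde{\mathcal{O}}(\sqrt{n})$ pieces of relevant connector information up the BFS tree, finish the greedy computation centrally at the root, and broadcast the result, each stage costing $\tilde{\mathcal{O}}(D+\sqrt{n})$ rounds.

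I expect the distributed realization of Phase~2 to be the main obstacle, on three fronts. First, one must prove that the batched, scale‑by‑scale greedy still achieves an $\mathcal{O}(\log n)$ approximation; this needs a charging argument controlling the ``overshoot'' of a batch, analogous to the analysis of parallel greedy set cover but complicated by the connectivity structure and the merging of fragments. Second, one must bound the round complexity even though two components may be connectable only through a long path --- so ``best connector'' information has to be aggregated over large distances --- and even though fragment diameters can blow up under merging; both issues are handled by routing everything through the global BFS tree, pipelining, and growing fragments in a ``controlled'' way, but making this fit the $\tilde{\mathcal{O}}(D+\sqrt{n})$ bound is delicate. Third, there is the \congest‑specific bookkeeping: computing each candidate connector's effectiveness (the number of fragments it would merge) and detecting conflicts among simultaneously selected connectors, all with $\mathcal{O}(\log n)$‑bit messages and without congestion. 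The weight‑rounding and Phase~1, by contrast, I expect to be comparatively routine.
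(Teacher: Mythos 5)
Your overall decomposition (first a dominating set, then connect its components; each step loses only $\mathcal{O}(\log n)$ because $\mathrm{OPT}$ is itself a dominating set and $\mathrm{OPT}\setminus S$ is a feasible connector set) matches the paper, and Phase~1 is handled the same way (via Jia et al.). The genuine gap is in your realization of Phase~2. The controlled-GHS/pipelining machinery you import from Kutten--Peleg does not transfer from \MST to this setting: in \MST each fragment's correct move is a single, locally computable minimum outgoing edge, and that is exactly what makes both the fragment-halving bound and the ``ship $\tilde{\mathcal{O}}(\sqrt{n})$ pieces of connector information to the root'' step work. Here the analogous object is a minimum-weight \emph{set of nodes} connecting a fragment to another, which the paper notes is itself $\mathsf{NP}$-hard to choose, and the candidate connectors (spiders/stars) are neither few nor succinctly describable: a single white node can be the center of exponentially many stars, and even computing the efficiency of the best star centered at a node requires knowing which components are adjacent to each of its neighbors, i.e., essentially learning its $2$-neighborhood, which is already too expensive in \congest. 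The paper circumvents this by restricting attention to \emph{basic-stars} (whose non-center nodes are adjacent to at most one unsatisfied component), for which a single round of message exchange suffices; your proposal never confronts this obstacle.

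Second, your batched scale-by-scale greedy is missing the two devices that make the paper's charging argument and round bound go through. (i) The paper does \emph{not} try to connect everything within one sweep of the greedy; instead each of $\mathcal{O}(\log n)$ phases freezes the current components and only satisfies \emph{half} of them at cost $\mathcal{O}(\mathrm{OPT})$ --- the paper points out that satisfying all components in one phase can cost $\Omega(\mathrm{OPT}\log n)$ by a set-cover reduction, which in the straightforward analysis would give an $\mathcal{O}(\log^2 n)$ approximation overall. (ii) When many near-optimal connectors are committed in one batch they may all be ``paying for'' the same few components; the paper controls this double counting by marking stars with probability $\Theta(1/\Delta^*)$ and a propose/grant protocol in which each component grants only $\mathcal{O}(1)$ proposals. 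Moreover, once a batch is committed the components merge and \emph{new} near-optimal stars appear; without the paper's clean-up step and the resulting monotonicity of the number of active stars per component, the number of iterations per efficiency scale is not bounded. ``Selecting a maximal batch of near-optimal connectors per scale'' does not by itself yield either the $\mathcal{O}(\log n)$ approximation or the $\tilde{\mathcal{O}}(D+\sqrt{n})$ round complexity.
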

\vspace{2pt}
\end{mdframed}
\fullOnly{\smallskip}

This algorithm is (near) optimal in both round complexity and approximation factor: Using techniques of \cite{DasSarma-11}, one can reduce the \emph{two-party set-disjointness communication complexity} problem on $\Theta(\sqrt{n})$-bit inputs to \MCDS, proving that the round complexity is optimal, up to logarithmic factors, for any approximation (see Appendix B \shortOnly{in the full version}). As mentioned above, the $\mathcal{O}(\log n)$ approximation factor is known to be optimal up to a constant factor, unless $\mathsf{P}=\mathsf{NP}$, assuming that nodes can only perform polynomial-time computations. Note that this assumption is usual, see e.g. \cite{Dubhashi-03, Jia-Rajmohan-Suel, Kuhn-Wattenhofer-03}. 

\subsection{Other Related Work} 
To the best of our knowledge, no efficient algorithm was known before for \MCDS in the \congest model. Notice that in the \local model, \MCDS boils down to a triviality and is thus never addressed in the literature: it is folklore\shortOnly{\footnote{In a cycle with $2D$ nodes, nodes need to learn the weight of the node at the opposite side, which is $D$ hops away and requires $D$ rounds.}}\fullOnly{\footnote{On one hand, $D$ rounds is enough for learning the whole graph. On the other, $D$ rounds is necessary for guaranteeing any approximation factor $\alpha$. Consider a cycle with with $2D$ nodes where two nodes $v$ and $u$ are at distance $D$. For each of $v$ and $u$, assign a random weight in $\{n^2, n^2\alpha+1\}$ and give weight $1$ to each other node. For the CDS to $\alpha$-factor optimal, the following should hold: if one of $v$ and $u$ has cost $n^2\alpha+1$, then before joining the CDS, it needs to make sure that the other does not have weight $n^2$. This requires $D$ rounds.}} that in this model, $D$ rounds is both necessary and sufficient for any approximation of \MCDS. However, a special case of \MCDS is interesting in the \local model; the so-called ``unweighted case" where all nodes have equal weight. Although, the unweighted-case has a significantly different nature as it makes the problem ``\emph{local}": Dubhashi et al.\cite{Dubhashi-03} present a nice and simple $\mathcal{O}(\log n)$ approximation for the unweighted-case algorithm which uses $\mathcal{O}(\log^2 n)$ rounds of the \local model. To our knowledge, the unweighted case has not been addressed in the \congest model, but we briefly comment in \fullOnly{\Cref{app:unweighted}}\shortOnly{Appendix A of the full version} that one can solve it in $\mathcal{O}(\log^2 n)$ rounds of the \congest model as well, by combining the dominating set approximation of Jia et al.\cite{Jia-Rajmohan-Suel} with the \emph{linear skeleton} of Pettie\cite{Pettie-Skeleton} and a simple trick for handling congestion. Another problem which has a name resembling \MCDS is the \emph{minimum-weight dominating set} ($\mathsf{MDS}$) problem. However, $\mathsf{MDS}$ is also quite different from \MCDS as the former is ``\emph{local}", even in the weighted case and the \congest model: an $\mathcal{O}(\log n)$ factor approximation can be found in $\mathcal{O}(\log^2 n)$ rounds\cite{Jia-Rajmohan-Suel, Kuhn-Wattenhofer-03} (see also  \cite{Kuhn-locality:2004}).  

%
%
%

\section{Preliminaries}
\paragraph{Distributed Model} As stated above, we use the \congest model: communication between nodes happens in lock-step rounds where in each round, one $B$-bits message can be sent on each direction of each edge, and we particularly focus on the standard case of $B=\mathcal{O}(\log n)$. The only global knowledge assumed is that nodes know an upper bound $N=\poly(n)$ on $n$. \fullOnly{We assume each node has a unique id with $\mathcal{O}(\log n)$ bits, although this is not critical as each node simply picking a random id in $\{0,1\}^{4\log N}$ would ensure uniqueness of ids, with high probability.} We use the phrase \emph{with high probability} (w.h.p.) to indicate a probability being at least $1-\frac{1}{n^\beta}$, for a constant $\beta\geq 2$. 

\paragraph{Notations and basic definitions}
We work with an undirected graph $G=(V,E)$, $n=|V|$, and for each vertex $v\in V$, $c(v)$ denotes the weight (i.e., cost) of node $v$. Throughout the paper, we will use the words \emph{cost} and \emph{weight} interchangeably. For each subset $T \subseteq V$, we define $\Cost(T) = \sum_{v\in T} c(v)$. We assume the weights are at most polynomial in $n$, so each weight can fit in one message (such assumptions are usual, e.g. \cite{Garay-Kutten-Peleg}). We use notation $\mathsf{OPT}$ to denote the CDS with the minimum cost. Also, for convenience and when it does not lead to any ambiguity, we sometimes use $\mathsf{OPT}$ to refer to the cost of the optimal CDS.

\paragraph{Problem Statement} Initially, each node $v$ knows only its own weight $c(v)$. The objective is to find a set $S$ in a distributed fashion---that is, each node $v$ will need to output whether $v\in S$ or not---such that $\Cost(S)=\mathcal{O}(\mathsf{OPT} \cdot \log n )$.

\paragraph{A Basic Tool (Thurimella's algorithm)}
\label{subsec:Thurimella}
A basic tool that we frequently use is a \emph{connected component identification algorithm} presented by Thurimella~\cite{Thurimella}, which itself is a simple application of the \MST algorithm of Kutten and Peleg\cite{Kutten-Peleg}.
Given a subgraph $H=(V, E')$ of the main network graph $G=(V, E)$, this algorithm identifies the connected components of $H$ by giving a label $\ell(v)$ to each $v$ such that $\ell(v)=\ell(u)$ if an only if $v$ and $u$ are in the same connected component of $H$. This algorithm uses $\mathcal{O}(D+\sqrt{n}\log^* n)$ rounds of the \congest model. It is easy to see that the same strategy can be adapted to solve the following problems also in $\mathcal{O}(D+\sqrt{n}\log^* n)$ rounds. Suppose each node $v$ has an input $x(v)$. For each node $v$, which is in a component $\mathcal{C}$ of $H$, we can make $\ell(v)$ be equal to: (A) the \emph{maximum} value $x(u)$ for nodes $u \in \mathcal{C}$ in the connected component of $v$, or (B) the \emph{list of $k=\mathcal{O}(1)$ largest} values $x(u)$ for nodes $u \in \mathcal{C}$, or (C) the \emph{summation} of values $x(u)$ for nodes $u \in \mathcal{C}$.

\vspace{-5pt}
\section{The Algorithm for \MCDS}
\vspace{-5pt}
\subsection{The Outline}
\vspace{-5pt}
The top-level view of the approach is as follows: We start by using the $\mathcal{O}(\log^2 n)$ rounds algorithm of \cite{Jia-Rajmohan-Suel} to find a dominating set $S$ with cost $\mathcal{O}(\log n \cdot \mathsf{OPT})$. The challenge is in adding enough nodes to connect the dominating set, while spending extra cost of $\mathcal{O}(\log n \cdot \mathsf{OPT})$. We achieve connectivity in $\mathcal{O}(\log n)$ phases. In each phase, we add some nodes to set $S$ so that we reduce the number of connected components of $S$ by a constant factor, while spending a cost of $\mathcal{O}(\mathsf{OPT})$. After $\mathcal{O}(\log n)$ phases, the number of connected components goes down to $1$, meaning that we have achieved connectivity.  Each phase uses $\tilde{\mathcal{O}}(D+\sqrt{n})$ rounds of the \congest model. What remains is to explain how a phase works. 

The reader might recall that such ``component-growing" approaches are typical in the \MST algorithms, e.g., \cite{Kutten-Peleg, Garay-Kutten-Peleg}. While in \MST, the choice of the edge to be added to each component is clear (\emph{the lightest outgoing edge}), the choice of the nodes to be added in \MCDS is not clear (and in fact can be shown to be an $\mathsf{NP}$-hard problem, itself).

The problem addressed in one phase can be formally recapped as follows (the reader might find the illustration in \Cref{fig:BaseView} helpful here): We are given a dominating subset $S\subseteq V$ and the objective is to find a subset $S' \subseteq V\setminus S$ with $\Cost(S') =\mathcal{O}(\mathsf{OPT})$ such that the following condition is satisfied. Let $\mathcal{F}$ be the set of subsets of $S$ such that each $\mathcal{C}\in \mathcal{F}$ is a connected component of $G[S]$. Call a connected component $\mathcal{C}\in \mathcal{F}$ \emph{satisfied} if in $G[S\cup S']$, $\mathcal{C}$ is connected to at least one other component $\mathcal{C}'\in \mathcal{F}$. We want $S'$ to be such that at least half of the connected components of $G[S]$ are satisfied. Note that if this happens, then the number of connected components goes down by a $3/4$ factor. To refer to the nodes easier, we assume that all nodes that are in $S$ at the start of the phase are colored \emph{green} and all the other nodes are \emph{white}, initially. During the phase, some white nodes will become gray meaning that they joined $S'$.

Before moving on to the algorithm, we emphasize two key points:

\begin{itemize}
\item[(1)] It is critical to seek satisfying only a constant fraction of the components of $G[S]$. Using a simple reduction from the set cover problem, it can be shown that satisfying all components might require a cost $\mathcal{O}(\mathsf{OPT} \log n)$ for a phase. Then, at least in the straightforward analysis, the overall approximation factor would become $\mathcal{O}(\log^2 n)$.
\item[(2)] In each phase, we \emph{freeze} the set of components $\mathcal{F}$ of $G[S]$. That is, although we continuously add nodes to the CDS and thus the components grow, we will not try to satisfy the newly formed components. We keep track of whether a component $\mathcal{C}\in \mathcal{F}$ is satisfied and the satisfied ones become ``inactive" for the rest of the phase, meaning that we will not try to satisfy them again. However, satisfied components will be used in satisfying the others.
\end{itemize}
\begin{figure}[t]
	\centering
		\includegraphics[width=0.60\textwidth]{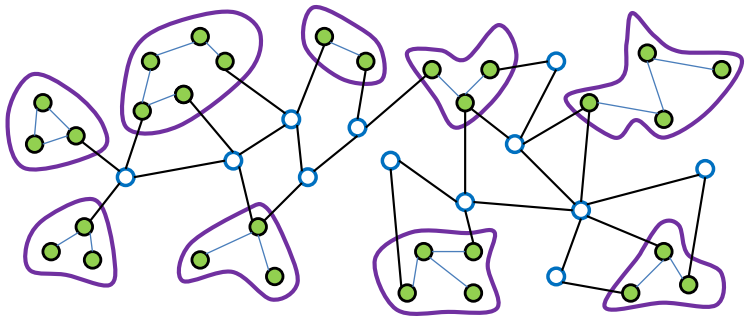}
	\caption{{\scriptsize An example scenario at the start of a phase. Green nodes indicate those in $S$ and white nodes are $V\setminus S$. Unrelated nodes and edges are discarded from the picture.}}
	\label{fig:BaseView}
	\vspace{-10pt}
\end{figure}

\subsection{A High-level View of the Algorithm for One Phase}
\begin{wrapfigure}{r}{0.30\textwidth}
  \shortOnly{\vspace{-40pt}}
	\begin{center}
    \includegraphics[width=35 mm]{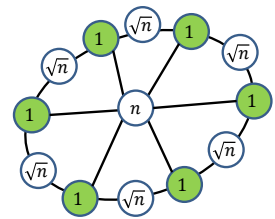}
  \end{center}
  \shortOnly{\vspace{-15pt}}
	\caption{{\scriptsize The naive approach}}
	\label{fig:naive}
	\shortOnly{\vspace{-20pt}}
\end{wrapfigure}
Note that since $S$ is a dominating set, $\mathcal{C}\in \mathcal{F}$ is satisfied iff there exist one or two nodes that connect $\mathcal{C}$ to another component $\mathcal{C'} \in \mathcal{F}$. That is, either there is a node $v$ such that path $\mathcal{C}$-$v$-$\mathcal{C}'$ connects component $\mathcal{C}$ to component $\mathcal{C}'$ or there are two adjacent nodes $v$ and $w$ such that path $\mathcal{C}$-$v$-$w$-$\mathcal{C}'$ does that. Having this in mind, and motivated by the solution for the unweighted case\cite{Dubhashi-03}, a naive approach would be that, for each component $\mathcal{C}$, we pick one or two nodes---with smallest total weight---that connect $\mathcal{C}$ to another component, and we do this for each component $\mathcal{C}$ independently. However, in the weighted case, this naive idea would perform terribly. To see why, let us consider a simple example (see \Cref{fig:naive}): take a cycle with $n-1$ nodes where every other node has weight $1$ and the others have weight $\sqrt{n}$, and then add one additional node at the center with weight $n$, which is connected to all weight-$1$ nodes. Clearly, the set of weight-$1$ nodes gives us an optimal dominating set. However, naively connecting this dominating set following the above approach would make us include at least half of the $\sqrt{n}$-weight nodes, leading to overall weight of $\Theta(n\sqrt{n})$. On the other hand, simply adding the center node $s$ to the dominating set would provide us with a CDS of weight $\mathcal{O}(n)$.

Inspired by this simple example, we view \emph{stars} as the key elements of optimization (instead of $2$ or $3$ hop paths). We next define what we mean by a star and outline how we use it. We note that the concept is also similar to the notion of \emph{spiders} used in~\cite{Klein-Ravi} for the node-weighted Steiner trees problem.   

\begin{definition}(\textbf{Stars}) A star $X$ is simply a set of white nodes with a \emph{center} $s\in X$ such that each non-center node in the star is connected to the center $s$. Naturally, we say a star $X$ \emph{satisfies} an active component $\mathcal{C}\in \mathcal{F}$ if adding this star to $S'$---that is, coloring its nodes gray---would connect $\mathcal{C}$ to some other component and thus make it satisfied. Let $\Phi(X)$ be the set of unsatisfied components in $\mathcal{F}$ that would be satisfied by $X$. We say a star is useless if $\Phi(X)=\emptyset$. The cost of a star $X$ is $\Cost(X)=\sum_{w\in X} c(w)$ and its efficiency is $\rho(X)=\frac{|\Phi(X)|}{\Cost(X)}$. We say $X$ is $\rho'$-efficient if $\rho(X)\geq \rho'$. 
\end{definition}

In \Cref{fig:naive}, each white node is one star, the center has efficiency $\Theta(1)$ and every other star has efficiency $\Theta(1/\sqrt{n})$. Notice that in general, different stars might intersect and even a white node $v$ might be the center of up to $2^{\Theta(n)}$ different stars.

\begin{figure}[t]
	\centering
		\includegraphics[width=0.70\textwidth]{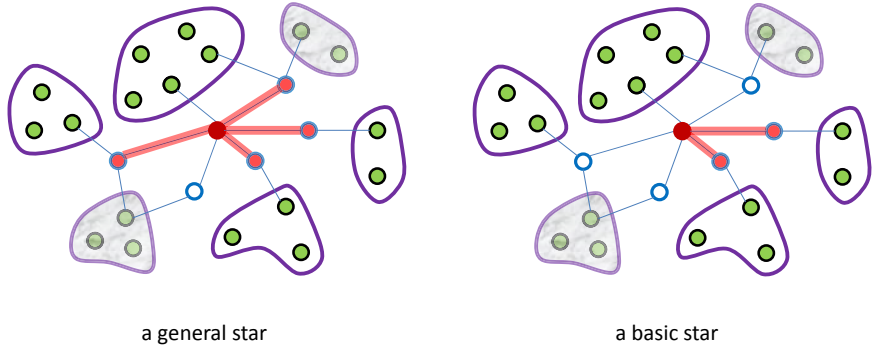}
		\vspace{-10pt}
	\caption{{\scriptsize A basic-star. The opaque components indicate those that are already satisfied and thus deactivated. Two legs of the general star (colored red, on the left) are discarded in the basic-star (colored red, on the right), as each of them forms a useful star, meaning that the leg itself can satisfy at least one active component.}}
	\label{fig:Star}
		\vspace{-15pt}
\end{figure}

\paragraph{The general plan (while ignoring some difficulties)} We greedily\fullOnly{\footnote{The greedy approaches are typically standard in solving \MCDS or other problems similar in nature. Furthermore, often the notion of \emph{efficiency} as explained above or some variant of it is the base of picking the next good move, in these greedy approaches. See e.g. \cite{Guha-Khuller, Guha-Khuller-2, Klein-Ravi, Berger-Rompel-Shor}.}} add stars to the gray nodes. That is, we pick a star that has the maximum efficiency and color its nodes gray. It can be shown that this greedy idea would satisfy half of components using cost only $\mathcal{\mathcal{O}}(\mathsf{OPT})$. However, clearly adding stars one by one would be too slow. Instead we adopt a nice and natural technique due to Berger et al.\cite{Berger-Rompel-Shor} which by now has become a standard trick for speeding up greedy approaches via parallelizing their steps. The key point is, stars that have efficiency within a constant factor of the max-efficiency are essentially as good as the max-efficient star and hence, we can add those as well. The only catch is, one needs to make sure that adding many stars simultaneously does not lead to (too much) \emph{double counting} in the efficiency calculations. In other words, if there are many stars that try to satisfy the same small set of components, even if each of these stars is very efficient, adding all of them is not a good idea. The remedy is to probabilistically add stars while the probabilities are chosen such that not too many selected stars try to satisfy one component. 

While this general outline roughly explains what we will do, the plan faces a number of critical issues. We next briefly hint at two of these challenges and present the definitions that we use in handling them.    

\paragraph{Challenge 1}The first step in the above outline is to compute (or approximate) the efficiency of the max-efficient star. Doing this for the general class of stars turns out the be a hard problem in the \congest model. Note that for a white node $v$ to find (or approximate) the most-efficient star centered on it, $v$ would need to know which components are adjacent to each of its white neighbors. As each white node might be adjacent to many components, this is like learning the $2$-neighborhood of $v$ and appears to be intrinsically slow in the \congest model. Instead, we will focus on a special form of stars, which we call \emph{basic-stars} and explain next. \Cref{fig:Star} shows an example.

\begin{definition}(\textbf{Basic-Stars})\label{def:basic-star} Call a white node $u$ \emph{self-sufficient} if $u$ is adjacent to two or more components, at least one of which is not satisfied. A star $X$ is called \emph{basic} if for each non-center node $w\in X$, $w$ is not self-sufficient. That is, the star $X'=\{w\}$ is useless.
\end{definition}

We argue later that, considering only the basic-stars will be sufficient for our purposes (sacrificing only a constant factor in the approximation quality) and that we can indeed evaluate the max-efficiency of the basic-stars.


\paragraph{Challenge 2} The other issue, which is a bit more subtle but in fact significantly more problematic, is as follows: as we color some white nodes gray, some components grow and thus, the efficiencies of the stars change. For instances, a useless star $X=\{v\}$ might now become useful--e.g., it gets connected to a satisfied component $\mathcal{C}'$ via a node $u$ that just got colored gray, and $X$ can now satisfy an adjacent unsatisfied component $\mathcal{C}$ by connecting it to $\mathcal{C}'$. Another example, which is rooted also in the congestion related issues, is as follows: During our algorithm, to be able to cope with communication issues, each white node $v$ will work actively on only one max-efficient basic-star centered on $v$. But, $v$ might be the center of many such stars and even if one of them looses the efficiency after this iteration, another max-efficient star which existed before might be now considered actively by $v$.

We note that, if there were no such \emph{``new-stars"} issues, we could use here standard methods such as (a modification of) the LP relaxation based technique of Kuhn and Wattenhofer~\cite{Kuhn-Wattenhofer-03}. However, these changes break that approach and it is not even clear how to formulate the problem as an LP (or even a convex optimization problem, for that matter).

If not controlled, these changes in the stars can slow down our plan significantly. For example, if for a given almost-maximum efficiency $\tilde{\rho}$, in each iteration a small number of $\tilde{\rho}$-efficient new basic-stars are considered actively, we will have to spend some time on these stars but as the result, we would satisfy only very few components, which would become prohibitively slow. To remedy this, when coloring stars gray, we will do it for certain types of $\tilde{\rho}$-efficient basic-stars, which we define next, and after that, we do some clean up work to remove the new $\tilde{\rho}$-efficient basic-stars that would be considered actively later on. 

\begin{wrapfigure}{r}{0.34\textwidth}
  \vspace{-10pt}
	\shortOnly{\vspace{-20pt}}
	\begin{center}
    \includegraphics[width=40 mm]{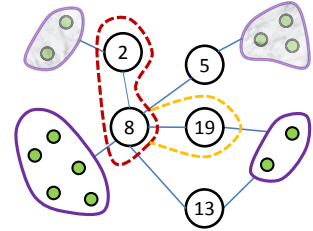}
  \end{center}
  \vspace{-10pt}
	\caption{{\scriptsize A $0.1$-augmented basic-star is indicated with the dashed lines; the red part is a minimal $0.1$-efficient basic-star and the orange part is a good auxiliary leg.}}
	\label{fig:augmented}
	\vspace{-20pt}
	\shortOnly{\vspace{-5pt}}
\end{wrapfigure}
\begin{definition}(\textbf{$\rho^*$-Augmented Basic-Stars})\label{def:aug-star}  A $\rho^*$-efficient basic-star $X$ centered on node $v\in X$ is called $\rho^*$-minimal if for any other star $X'\subset X$ centered on $v$, we have, $\rho(X') < \rho^*$. For a $\rho^*$-minimal basic-star $X$ centered on $v$, a \emph{good auxiliary-leg} is a white node $u \notin X$ that is adjacent to $v$ and furthermore, the following conditions are satisfied: $u$ is adjacent to only one component $\mathcal{C} \in \mathcal{F}$, component $\mathcal{C}$ is not satisfied and it is not adjacent to $X$, and we have $\Cost(u) \leq 2/\rho^*$. 
A \emph{$\rho^*$-Augmented Basic-Star} $X'$ is one that can be derived by (one-by-one)\fullOnly{\footnote{This has to be done one-by-one as adding one good auxiliary leg might make the star adjacent to a component $\mathcal{C}$ and then, no other white node adjacent to $\mathcal{C}$ can be a good auxiliary-leg.}} adding to $\rho^*$-minimal basic-star $X$ all good auxiliary-legs adjacent to its center.  
\end{definition}

An example is shown in \Cref{fig:augmented}. The actual reasoning for why this definition is good is somewhat subtle to be explained intuitively. A very rough version is as follows: after coloring some $\rho^*$-augmented basic-stars gray, by just handling the nodes which each have cost at most $1/\rho^*$ (in a step we call clean up), we will be able to remove any new $\rho^*$-augmented basic-star. The point should become clear after seeing the algorithm \fullOnly{(and Lemmas \ref{lem:cleanup} and \ref{lem:decreasing-degrees} in \Cref{app:analysis})}. 


\begin{observation}
Each $\rho^*$-Augmented Basic-Star $X$ has efficiency $\rho(X)\geq \frac{\rho^*}{2}$. Furthermore, if a $\rho^*$-Augmented Basic-Stars $X$ contains a white node $w$, then all unsatisfied components adjacent to $w$ get satisfied by $X$.
\end{observation}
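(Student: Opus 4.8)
The plan is to establish the two assertions in sequence, treating the second one first because it is a clean reachability fact that also feeds into the first. Fix the current set of gray nodes $S'$, and for a set $Y$ of white nodes let $K(Y)$ denote the connected component of $G[S\cup S'\cup Y]$ containing $Y$ (well defined whenever $Y$ is connected in that graph, which holds for a star). Let $X$ be the $\rho^*$-augmented basic-star in question; by \Cref{def:aug-star} it decomposes as $X = X_0 \cup \{u_1,\dots,u_k\}$, where $X_0$ is an underlying $\rho^*$-minimal basic-star centered at some node $v$ and $u_1,\dots,u_k$ are good auxiliary legs, disjoint from $X_0$ and adjacent to $v$, added in this order. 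In particular $X$ is connected, all of $X$ lies in $K(X)$, and $X_0\subseteq X$ forces $K(X_0)\subseteq K(X)$.

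The first thing I would record is that $K(X)$ spans at least two members of $\mathcal{F}$. Indeed, $X_0$ is $\rho^*$-efficient with $\rho^*>0$, so $\Phi(X_0)\neq\emptyset$; enlarging a star never un-satisfies a component, so $\Phi(X)\supseteq\Phi(X_0)\neq\emptyset$. Picking any $\mathcal{C}^\star\in\Phi(X)$, coloring $X$ gray connects $\mathcal{C}^\star$ to another component $\mathcal{C}'$ through $X$, hence $\mathcal{C}^\star$ and $\mathcal{C}'$ both lie in $K(X)$. For the second assertion, take any white $w\in X$ and any unsatisfied $\mathcal{C}\in\mathcal{F}$ adjacent to $w$. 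Since $\mathcal{C}\subseteq S$ and $w\in X\subseteq K(X)$, the edge from $w$ into $\mathcal{C}$ together with the internal connectivity of $\mathcal{C}$ places $\mathcal{C}\subseteq K(X)$; as $K(X)$ already contains another member of $\mathcal{F}$, the component $\mathcal{C}$ becomes connected to it once $X$ is colored gray, i.e.\ $X$ satisfies $\mathcal{C}$. This is exactly the second claim, and in particular it shows that each good auxiliary leg's component $\mathcal{C}_i$ (which is adjacent to $u_i\in X$) satisfies $\mathcal{C}_i\in\Phi(X)$.

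For the efficiency bound, the goal is to show $|\Phi(X)|\ge|\Phi(X_0)|+k$ while $\Cost(X)=\Cost(X_0)+\sum_{i=1}^k\Cost(u_i)\le\Cost(X_0)+2k/\rho^*$ (using $\Cost(u_i)\le 2/\rho^*$ from \Cref{def:aug-star}); then, since $\rho(X_0)=|\Phi(X_0)|/\Cost(X_0)\ge\rho^*$, the fraction
\[
\rho(X)=\frac{|\Phi(X)|}{\Cost(X)}\ \ge\ \frac{|\Phi(X_0)|+k}{\Cost(X_0)+2k/\rho^*}
\]
is a mediant of $\rho(X_0)\ge\rho^*$ and $k/(2k/\rho^*)=\rho^*/2$, hence at least $\rho^*/2$ (the case $k=0$ being immediate). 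So the crux is $|\Phi(X)|\ge|\Phi(X_0)|+k$, which amounts to showing that $\mathcal{C}_1,\dots,\mathcal{C}_k$ are pairwise distinct and none lies in $\Phi(X_0)$. Distinctness is built into the one-by-one construction: once $u_i$ has been added, $\mathcal{C}_i$ is adjacent to the current partial star, so a later leg $u_j$ cannot have $\mathcal{C}_j=\mathcal{C}_i$. For $\mathcal{C}_i\notin\Phi(X_0)$ I would show that $\mathcal{C}_i$ cannot even lie in the vertex set $K(X_0)$, whereas every member of $\Phi(X_0)$ does — and this is where the real work lies.

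The main obstacle is precisely this ``no double-counting'' step. It needs the structural invariant that an \emph{unsatisfied} component $\mathcal{C}$ is isolated in $G[S\cup S']$, i.e.\ its $G[S\cup S']$-connected component is $\mathcal{C}$ itself with no gray node attached; granting this, any path in $G[S\cup S'\cup X_0]$ from $\mathcal{C}_i$ to $X_0$ would have to reach $X_0$ directly from a vertex of $\mathcal{C}_i$, contradicting the requirement (\Cref{def:aug-star}) that $\mathcal{C}_i$ is not adjacent to $X_0$; hence $\mathcal{C}_i\not\subseteq K(X_0)$, and since every member of $\Phi(X_0)$ is contained in $K(X_0)$, we get $\mathcal{C}_i\notin\Phi(X_0)$. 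This isolation invariant is not yet available at this point in the exposition, so in the write-up I would either forward-reference it or, more cleanly, derive it on the spot from the fact that gray nodes are created only inside \emph{useful} stars: such a star, when colored gray, merges at least two members of $\mathcal{F}$ into the $G[S\cup S']$-component of each of its nodes, a property that only persists as $S'$ grows. Everything else — the connectivity arguments and the mediant inequality — is routine.
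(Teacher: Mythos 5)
Your proof is correct. The paper states this observation without any proof of its own, so there is nothing to compare against; your write-up supplies a complete justification, and the two points you single out as the real content --- the mediant inequality, which requires each auxiliary leg's component to be distinct from the others and absent from $\Phi(X_0)$, and the isolation of unsatisfied components from gray nodes, which you correctly derive from the fact that gray nodes are only ever created inside useful stars (or inside the one-or-two-node connectors of the clean-up step) --- are exactly the non-trivial steps and are handled soundly.
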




\subsection{The Algorithm For One Phase}
\shortOnly{\vspace{-5pt}}
The objective of the algorithm is to satisfy at least half of the components, using a cost $\mathcal{O}(\mathsf{OPT})$, and in $\mathcal{O}((D+\sqrt{n}\log^* n)\log^3 n)$ rounds.
Throughout the phase, each non-white node will keep track of whether its component in $\mathcal{F}$ is satisfied or not. Let $N=|\mathcal{F}|$ and also, make all nodes know $N$ by running Thurimella's connected component identification at the start of the phase and then globally gathering the number of components. 

While at least $\lfloor N/2\rfloor$ components in $\mathcal{F}$ remain unsatisfied, we repeat the following iteration, which has $8$ steps---$\mathcal{S}1$ to $\mathcal{S}8$---and each step uses $\mathcal{O}(D+\sqrt{n}\log^* n)$ rounds: 
\begin{itemize}
\item[\textbf{($\mathcal{S}1$)}] We first use Thurimella's algorithm (see \Cref{subsec:Thurimella}) to identify the connected components of non-white nodes and also to find out whether each component is satisfied (i.e. if it contains a gray node). These take $\mathcal{O}(D+\sqrt{n}\log^* n)$ rounds. Each non-white node broadcasts its component id and whether its component is satisfied to all neighbors. We also find the total number of unsatisfied connected components and if it is less than $N/2$, we call this phase finished and start the next phase.

\item[\textbf{($\mathcal{S}2$)}] We now find the globally-maximum efficiency $\rho^*$ of the basic-stars. 

They key part is to compute the efficiency of the most-efficient basic-star centered on each white node. After that, the global-maximum can be found in $\mathcal{O}(D)$ rounds easily. We first use one round of message exchanges between the white nodes so that each white node knows all the basic-stars it centers.  

\hspace{3ex} Each white node $v$ does as follows: if $v$ is adjacent to only one component (satisfied or unsatisfied), it sends the id of this component, its satisfied/unsatisfied status and $v_{id}$ to its neighbors. If $v$ is adjacent to two or more components, but all of them are satisfied, then $v$ sends a message to its neighbors containing $v_{id}$ and an indicator message ``\emph{all-satisfied}". If $v$ is adjacent to two or more components, at least one of which is unsatisfied, then $v$ does not send any message. This is because, by \Cref{def:basic-star}, node $v$ is \emph{self-sufficient} and it thus can be only in basic-stars centered on $v$. At the end of this round, each white node $v$ has received some messages from its white neighbors. These messages contain all the information needed for forming all the basic-stars centered on $v$ and calculating their efficiency. Node $v$ finds the most-efficient of these basic-stars. It is easy to see that this can indeed be done in polynomial-time local computation\fullOnly{\footnote{For that, node $v$ first adds itself to the basic-star. Then, it discards any adjacent white node $u$ for which the only unsatisfied component adjacent to $u$ is also adjacent to $v$. Then, $v$ sorts all the remaining white-neighbors $u_1, u_2, ..., u_\ell$---from which it received a message---by increasing cost order. It then adds $u_i$-s one by one to its basic-star, as long as each new addition increases the efficiency. Since each white-neighbor is adjacent to at most one unsatisfied component, it is easy to see that this indeed finds the maximum efficiency.}}.
We emphasize that the basic-stars found in this step are not important and the only thing that we want is to find the globally-maximum efficiency $\rho^*$. 

\item[\textbf{($\mathcal{S}3$)}] Let $\tilde{\rho}= 2^ {\lfloor\log_{2}{\rho^*}\rfloor}$, i.e., $\tilde{\rho}$ is equal to $\rho^*$ rounded down to the closest power of $2$. We pick at most one $\tilde{\rho}$-augmented basic-star $X^i_v$  (see \Cref{def:aug-star}) centered on each white node $v$, where $i$ is the iteration number. 

\hspace{3ex} We reuse the messages exchanged in the previous step. First, each white node $v$ finds a minimal $\tilde{\rho}$-efficient basic-star centered on $v$, if there is one. Call this the \emph{core-star of $v$}. Then, $v$ adds to this core-star any good auxiliary-legs available (one by one), to find its $\rho^*$-augmented basic-star $X^i_v$. This is the only star centered on $v$ that will be considered for the rest of this iteration. 
Thus, at most one star $X^i_v$ centered on each white node $v$ remains active for the rest of iteration $i$. Note that all active remaining stars are $\tilde{\rho}/2$-efficient. 

\hspace{3ex} For each active-remaining star $X^i_v$ and each unsatisfied component $\mathcal{C}$ it satisfies, the center $v$ elects one of the white nodes of the star to be \emph{responsible for communicating}\fullOnly{\footnote{Note that each star might have many nodes that are adjacent to an unsatisfied component $\mathcal{C}$. As this would be problematic for our communication purposes, we avoid this by making only one node in the star responsible for each unsatisfied adjacent component.}} with $\mathcal{C}$. If $\mathcal{C}$ has at least one non-center neighbor in $X^i_v$, then one such non-center node $u$ (selected arbitrarily) is called \emph{responsible for communicating with} $\mathcal{C}$. Otherwise, the center $v$ is responsible\footnote{Since any white node $u$ that is not self-sufficient is adjacent to at most one unsatisfied component, in any basic-star that contains $u$, node $u$ can be responsible only for this one unsatisfied adjacent component. On the other hand, if $v$ is self-sufficient, it will be only in one star $X^i_v$.} for communicating with $\mathcal{C}$.

\item[\textbf{($\mathcal{S}4$)}] For each unsatisfied component $\mathcal{C} \in \mathcal{F}$, we find the number of active stars that satisfy $\mathcal{C}$. The objective is to find the maximum such number $\Delta^*_{\tilde{\rho}}$, over all unsatisfied components. First, each white node $v$ that centers an active star $X^i_v$ reports this star to each non-center node $u$ of it, by just sending $v_{id}$, special message \emph{active-star}, and the id of the component $\mathcal{C}$ for which $u$ is responsible for communicating with (if there is one). Then, for each white node $w$ and each unsatisfied component $\mathcal{C}$ that $w$ is responsible for communicating with it in any star, node $u$ sends to one of the nodes of $\mathcal{C}$ the number of stars in which $u$ is responsible for communicating with $\mathcal{C}$. These counts are summed up in each component $\mathcal{C}$ via Thurimella's algorithm, and it is called the \emph{active-degree} of $\mathcal{C}$. The maximum active-degree is found globally and called $\Delta^*_{\tilde{\rho}}$. 

\item[\textbf{($\mathcal{S}5$)}] Next, some active stars propose to their adjacent unsatisfied components.\fullOnly{

\hspace{3ex}}\shortOnly{}We mark each active star with probability $\frac{1}{5\Delta^*_{\tilde{\rho}}}$, where the decision is made randomly by the center of the star and sent to the other nodes of the star (if there is any). Then, these marks are sent to the components that get satisfied by the marked stars, as proposals, via the white nodes that are responsible for communicating with the components. If $v$ is self-sufficient, it would need to send at most one proposal to each adjacent component (it would be to those components for which $v$ is responsible for communicating with them in $X^i_v$). However, if $v$ is not self-sufficient, then $v$ might want to send many proposals to an unsatisfied component adjacent to it (there is at most one such component). This is not feasible in the \congest model. Instead, $v$ selects at most $3$ of these proposals (arbitrarily) and just submits these $3$ proposals.

\item[\textbf{($\mathcal{S}6$)}] Each component grants at most $3$ of the proposals it receives. This is done via Thurimella's algorithm, where $3$ proposals with largest center ids are granted. Finally components report the granted proposals to the adjacent white nodes.

\item[\textbf{($\mathcal{S}7$)}] Each marked star collects how many of its proposals are granted. If at least $1/3$ of the proposals of this star were granted, then all nodes of this marked star become gray. After that, we use Thurimella's algorithm again to identify the green nodes which their component (in $\mathcal{F}$) is satisfied (by checking if their component has a gray node). 

\item[\textbf{($\mathcal{S}8$)}] Finally, we have a \emph{clean up} step, which removes the newly-formed $\tilde{\rho}$-augmented basic-stars that if not removed now, might be active in the next iterations.  
Temporarily (just for this clean up step) color each white node \emph{blue} if its cost is at most $1/\tilde{\rho}$. For each unsatisfied component $\mathcal{C} \in \mathcal{F}$ that can be satisfied using only blue nodes, we find one or two blue nodes that connect $\mathcal{C}$ to some other component in $\mathcal{F}$ and we color these blue nodes gray, thus making $\mathcal{C}$ satisfied.
In the first round, for each blue node $v$, if $v$ is adjacent to only one component, it sends the id of this component and its own id $v_{id}$. If $v$ is adjacent to two or more components, it just sends its own id with an indicator symbol \emph{``two-or-more"}. 
In the second round, for each blue node $u$, if $u$ is adjacent to an unsatisfied component $\mathcal{C}$,  node $u$ creates a proposal for $\mathcal{C}$ as follows: if $u$ is adjacent to at least one other component $\mathcal{C}'\in \mathcal{F}$, then the proposal is simply the id of $u$. If $u$ is not adjacent to any other component $\mathcal{C}'$ but there is a blue neighbor $w$ of $u$ such that in the first round, $w$ sent the id of a component $\mathcal{C}''\neq \mathcal{C}$ or $w$ sent the \emph{``two-or-more"} indicator symbol, then the proposal contains the ids of $u$ and $v$. Otherwise, the proposal is empty. 
Each unsatisfied component picks one (nonempty) proposal, if it receives any, and grants it. The granted proposal is reported to all nodes adjacent to the component and if the proposal of $u$ is granted, it becomes gray and if this granted proposal contained a blue neighbor $w$, then $u$ informs $w$ about the granted proposal which means that $w$ also becomes gray. 
\end{itemize}

\fullOnly{
\paragraph{A remark about the time complexity} In the above algorithm, each phase takes $\mathcal{O}(\log^3 n)$ iterations, w.h.p., which leads to  \fullOnly{$\mathcal{O}((D+\sqrt{n}\log^* n)\log^3 n)$ rounds for each phase, and thus} $\mathcal{O}((D+\sqrt{n}\log^* n)\log^4 n)$ rounds for the whole algorithm. One can remove one logarithmic factor off of this complexity by (further) leveraging the fact that in each phase, we need to satisfy only half of the components. \fullOnly{To do that, if for a max-efficiency level $\tilde{\rho}$ and the respective max-component-degree $\Delta^*_{\tilde{\rho}}$, we have satisfied at least half of the components with degree at least $\Delta^*_{\tilde{\rho}}/2$, we can \emph{excuse} the other half from needing to be satisfied in this phase. This way, with constant probability, after just a constant number of iterations, we are done with components of degree at least $\Delta^*_{\tilde{\rho}}/2$. A standard concentration bound then shows that w.h.p. $\mathcal{O}(\log n)$ iterations are enough for all degree levels (with respect to efficiency $\tilde{\rho}$).} We defer the formal claim about this and the detailed algorithm to the journal version.}

\shortOnly{
\bigskip
\noindent Due to the space limitations, the analysis are deferred to the full version.}

\fullOnly{\subsection{Analysis}
\label{app:analysis}
For the analysis, we need to establish two facts, (1) that the cost of each phase is $\mathcal{O}(\mathsf{OPT})$, and (2) that each phase takes only $\mathcal{O}(\log^3 n)$ iterations, w.h.p. As each iteration is implemented in $\mathcal{O}(D+\sqrt{n}\log^* n)$ rounds, these prove the desired properties of each phase.

\subsubsection{Cost Related Analysis}
\label{subsub:cost}
 In each iteration, we color some white nodes gray and thus satisfy some components. This is done in a way that the overall efficiency of the nodes added in this iteration is within a constant factor of the best basic-star. That is, the number of components satisfied in this iteration is $\Theta(\rho^*)$ times the total cost of the nodes colored gray in this iteration. As the heart of cost analysis, we show that in each iteration, as long as at least $N/2$ unsatisfied components exist, $\rho^*\geq \frac{N}{4\mathsf{OPT}}$. This will be done by showing that one can cover the (white nodes of) $\mathsf{OPT}$ with basic-stars, such that each white node is in at most $2$ basic-stars.

\begin{lemma}\label{lem:efficiency-LB} In each phase, we spend a cost of at most $\mathcal{O}(\mathsf{OPT})$.
\end{lemma}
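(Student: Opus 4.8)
The plan is to bound the cost per iteration by $\mathcal{O}(\mathsf{OPT}/\rho^*)$ and the total number of components satisfied per iteration by $\Omega(\rho^* \cdot (\text{cost spent}))$, so that summing over all iterations telescopes. The key quantitative claim, which I would prove first, is the \emph{efficiency lower bound}: as long as at least $\lfloor N/2\rfloor$ components of $\mathcal{F}$ are still unsatisfied, the globally-maximum basic-star efficiency satisfies $\rho^* \geq \frac{N}{4\,\mathsf{OPT}}$. To see this, consider the optimal CDS $\mathsf{OPT}$ and its white nodes (those not in $S$). Since $\mathsf{OPT}\cup S$ is connected and $S$ is a dominating set, the white nodes of $\mathsf{OPT}$ suffice to connect all of $\mathcal{F}$ into one component; in particular they can be organized so that each currently-unsatisfied component gets connected to another component. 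I would decompose the white part of $\mathsf{OPT}$ into a collection of stars (centered at white nodes) covering all of $\mathsf{OPT}$'s white nodes such that each white node lies in at most $2$ of these stars — this is where the ``spider decomposition'' idea à la \cite{Klein-Ravi} enters. Then $\sum_X \Cost(X) \leq 2\,\mathsf{OPT}$, while $\sum_X |\Phi(X)| \geq \lfloor N/2 \rfloor$ since these stars collectively satisfy every unsatisfied component. By averaging, some star $X$ in this collection has $\rho(X) = |\Phi(X)|/\Cost(X) \geq \frac{\lfloor N/2\rfloor}{2\,\mathsf{OPT}} \geq \frac{N}{4\,\mathsf{OPT}}$. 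The remaining subtlety is that an arbitrary such star need not be \emph{basic}; but any non-basic star can be split — a self-sufficient non-center node $w$ forms its own useful singleton star $\{w\}$ of efficiency $\geq 1/c(w)$, and removing it from the big star only decreases the big star's cost — so by repeatedly peeling off self-sufficient nodes one still obtains a basic star (or a singleton) of efficiency $\geq \frac{N}{4\,\mathsf{OPT}}$. Hence $\rho^* \geq \frac{N}{4\,\mathsf{OPT}}$, and since $\tilde\rho$ is $\rho^*$ rounded down to a power of two, $\tilde\rho \geq \rho^*/2 \geq \frac{N}{8\,\mathsf{OPT}}$.

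Next I would account for the cost spent in a single iteration at efficiency level $\tilde\rho$. Every star that is colored gray in step $\mathcal{S}7$ is $\tilde\rho/2$-efficient (by the Observation on $\rho^*$-augmented basic-stars, augmentation only costs a factor $2$) and, because it was marked and at least $1/3$ of its proposals were granted, it newly satisfies at least $\frac{1}{3}|\Phi(X)|$ components that are \emph{charged to no other gray star in this iteration} — here I would use that each component grants at most $3$ proposals (step $\mathcal{S}6$), so the total ``granting capacity'' of the components is at most $3$ per component, and thus the components newly satisfied in this iteration, counted with multiplicity over the gray stars, exceed the number of distinct newly-satisfied components by at most a constant factor. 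Combining: the cost colored gray in this iteration is at most a constant times $(\text{number of components newly satisfied this iteration})/\tilde\rho \leq \mathcal{O}(1)\cdot(\text{number newly satisfied})\cdot \frac{\mathsf{OPT}}{N}$. Summing over all iterations of the phase, the total number of components ever newly satisfied is at most $N$, so the total cost of the phase is $\mathcal{O}(N)\cdot \frac{\mathsf{OPT}}{N} = \mathcal{O}(\mathsf{OPT})$, as desired. I would also note that the clean-up step $\mathcal{S}8$ only colors gray nodes of cost at most $1/\tilde\rho$ and only one or two of them per component it satisfies, so its cost is likewise $\mathcal{O}(1/\tilde\rho)$ per satisfied component and folds into the same accounting.

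The main obstacle I anticipate is making the ``no double-counting'' argument fully rigorous — specifically, showing that a gray star $X$ with at least $1/3$ of its proposals granted contributes $\Omega(|\Phi(X)|)$ components that are genuinely charged to $X$ alone. The marking probability $\frac{1}{5\Delta^*_{\tilde\rho}}$ is tuned precisely so that, in expectation, a component of active-degree close to $\Delta^*_{\tilde\rho}$ receives $O(1)$ proposals, so that with constant probability each proposing star sees most of its proposals uncontested-enough to be granted; the per-component cap of $3$ grants then guarantees that the components, summed over gray stars with their $\Phi$-multiplicities, are within a constant factor of the distinct count. Turning this expectation statement into the deterministic-looking cost bound requires care: strictly speaking the cost bound $\mathcal{O}(\mathsf{OPT})$ should be read as holding in expectation (or w.h.p. via a concentration argument over the $\mathcal{O}(\log^3 n)$ iterations and $\mathcal{O}(\log n)$ phases), and I would state it that way. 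The efficiency lower bound itself is deterministic and is the conceptual heart; the probabilistic parallelization is what needs the delicate bookkeeping.
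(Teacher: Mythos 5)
Your overall strategy matches the paper's: lower-bound the best basic-star efficiency by $N/(4\,\mathsf{OPT})$ via a multiplicity-$2$ star cover of the white nodes of $\mathsf{OPT}$, observe that everything colored gray in an iteration (including the clean-up) has efficiency $\Theta(\tilde\rho)$, and charge the total cost against the at most $N$ components satisfied during the phase. Your direct averaging (mediant inequality) is just the contrapositive of the paper's cost-splitting contradiction, so that part is sound. The genuine gap is in converting an arbitrary star cover of $\mathsf{OPT}$'s white nodes into a cover by \emph{basic} stars. You claim that peeling a self-sufficient non-center node $w$ off a star $X$ ``only decreases the big star's cost'' and hence preserves efficiency. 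But peeling $w$ can also decrease $|\Phi(X)|$: $w$ may be the only node of $X$ adjacent to any component other than some unsatisfied $\mathcal{C}$, so the residual star no longer satisfies $\mathcal{C}$ at all, while the singleton $\{w\}$ need not be adjacent to $\mathcal{C}$ and so need not satisfy it either. Thus neither the per-star efficiency bound nor the cover property (every unsatisfied component satisfied by some star in the collection) survives the peeling as stated, and the averaging argument requires the cover property. The paper avoids this by building basic stars directly: it calls a node of $T$ \emph{lonely} if it is adjacent to exactly one component and that component is unsatisfied, processes $T$ sequentially, and lets $X_v$ consist of $v$ together with all not-yet-hit lonely neighbors of $v$ (adding $X_v$ only if it satisfies something). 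Non-center nodes are then lonely, hence not self-sufficient, so the stars are basic by construction; each node lies in at most two stars; and a short case analysis (a non-lonely $v$ adjacent to $\mathcal{C}$ yields a useful $X_v$; a lonely $v$ is eventually hit by some added star, which then necessarily satisfies $v$'s unique unsatisfied component) gives the cover property. You would need to replace the peeling step by a construction of this form.

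A smaller point: your hedge that the $\mathcal{O}(\mathsf{OPT})$ bound might hold only in expectation is unnecessary. The charging is deterministic: a star turns gray only if at least $1/3$ of its proposals are granted, each component grants at most $3$ proposals, and any component that grants a proposal to a star that turns gray becomes satisfied; hence $\sum_{X \text{ gray}}|\Phi(X)|$ is at most a constant times the number of components newly satisfied in that iteration, whatever the outcome of the random marking. The randomness only affects how many iterations are needed (the speed analysis of Lemma \ref{lem:progress}), not the cost.
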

\begin{proof} 
First note that, in steps $\mathcal{S}6$ and $\mathcal{S}7$, each active star has efficiency at least $\tilde{\rho}/2$, and an active star becomes gray if at least $1/3$ of its proposals are granted and each component grants at most $3$ proposals. Thus, the efficiency of the whole set of white nodes colored gray in step $\mathcal{S}7$ is $\Theta(\tilde{\rho})$. Moreover, in the clean up step (step $\mathcal{S}8$), each component grants at most one proposal and each proposal contains at most two blue nodes, each of which has weight at most $1/\tilde{\rho}$. Hence, the efficiency in the clean up step is also $\Theta(\tilde{\rho})$.

Now as the key part of the proof, we claim that in each iteration in which at least $N/2$ unsatisfied components remain, there is at least one basic-star that has efficiency of at least $\frac{N}{4\mathsf{OPT}}$. This claim implies that in this iteration, $\tilde{\rho} \geq \rho^*/2 \geq \frac{N}{8\mathsf{OPT}}$. Over all iterations of this phase, we satisfy at most $N$ components, always with an efficiency $\Theta(\tilde{\rho})$, which means that we spend a cost of $\mathcal{O}(\mathsf{OPT})$ over the whole phase. Recall that if the number of the unsatisfied components drops below $N/2$, we call the phase finished, and move to the next phase.

Now to prove the claim, consider one iteration and assume that at least $N/2$ unsatisfied components remain. For the sake of contradiction, suppose that each basic-star has efficiency strictly less than $\frac{N}{4\mathsf{OPT}}$. Consider the minimum-cost CDS $\mathsf{OPT}$. Let $T$ be the set of white nodes in $\mathsf{OPT}$. We cover $T$ with a number of basic-stars $X_1, X_2, \dots, X_\ell$ such that each node of $T$ is in at most two of these basic-stars and each unsatisfied component can be satisfied by at least one of these basic stars. Then, for each $X_i$, define $C'(X_i)=\Cost(X_i)/2$. Note that $\sum_{i=1}^{\ell} C'(X_i) \leq \sum_{v\in T} c(v)=\mathsf{OPT}$. Each basic-star $X_i$ splits cost $C'(X_i)$ equally between the unsatisfied components $\Phi(X)$ that get satisfied by $X$. That is, each such component gets cost $\frac{C'(X_i)}{|\Phi(X_i)|} > \frac{2\mathsf{OPT}}{N}$ from star $X_i$. Hence, each unsatisfied component gets a cost strictly greater than $\frac{2\mathsf{OPT}}{N}$ and summed up over all the unsatisfied components---which are at least $N/2$ many---, we get that $\sum_{i=1}^{\ell} C'(X_i)> \mathsf{OPT}$, which is a contradiction. 

What is left is thus to show that we can cover $T$ with a number of basic-stars $X_1$, $X_2$, $\dots$, $X_\ell$ such that each node of $T$ is in at most two of these basic-stars and each unsatisfied component can be satisfied by at least one of these basic stars. We give a simple sequential procedure which produces such basic-stars. During this procedure, each node $v\in T$ keeps a Boolean variable $hit_v$ which is false initially. For each node $v\in T$, call $v$ \emph{lonely} if it is adjacent to exactly one component and that component is not satisfied. 

Sequentially, go over the nodes in $T$ one by one and for each $v\in T$, do as follows: consider the star $X_v$ made of $v$ and all lonely neighbors $w$ of $v$ that are not hit so far, i.e., those such that $hit_w=false$. Add $X_v$ to the collection if it satisfies at least one component, and if this happens, also for each $w\in X_v\setminus \{v\}$, set $hit_w=true$. Note that if a lonely node $w$ gets hit, then the single unsatisfied component $\mathcal{C}$ adjacent to $w$ gets satisfied by $X_v$.

Now note that in this algorithm, each node $u$ will be in at most two stars, one star $X_u$ that is centered on $u$, and one star $X_v$ that is centered on a neighbor $v$ of $u$ and such that in the iteration in which we consider $w$, we set $hit_u=true$. 

On the other hand, consider an unsatisfied component $\mathcal{C}$. We show that $\mathcal{C}$ gets satisfied by one of the basic-stars produced by the above algorithm. Note that $\mathsf{OPT}$ satisfies $\mathcal{C}$. Therefore, there is a white node $v\in T$ that is adjacent to $\mathcal{C}$ and either $v$ is adjacent to a different component $\mathcal{C}'\neq \mathcal{C}$ or $v$ has another white neighbor $w \in T$ and $w$ is adjacent to a different component $\mathcal{C}'\neq \mathcal{C}$. Now if the node $v$ is not lonely, it is adjacent to at least two components, and hence $X_v$ satisfies $\mathcal{C}$ and we are done. Otherwise, suppose $v$ is lonely. Since $v$ is lonely, when we consider $w$ in the loop, either $v$ is already hit by some other basic-star $X_{w'}$, or the basic-star $X_w$ hits $v$. In either case, $\mathcal{C}$ gets satisfied. This finishes the proof. 
\end{proof}

\subsubsection{Speed Related Analysis}
\label{subsub:speed}
The speed analysis has more subtle points. We show that after $\mathcal{O}(\log^3 n)$ iterations, at least half of the components would be satisfied and thus this phase ends. A critical point for establishing this is to show that, thanks to the clean up step (analyzed in \Cref{lem:cleanup}), for each unsatisfied component, the number of active $\tilde{\rho}$-augmented basic-stars $X^i_v$ that satisfy this component is monotonically non-increasing when viewed as a function of the iteration number $i$. This part will be our main tool for managing the issue of ``\emph{new stars}" (discussed in Challenge 2 above), and is proven in \Cref{lem:decreasing-degrees}. Furthermore, in \Cref{lem:progress}, we show that with at least a constant probability, a constant fraction of the components that are now each in at least $\Delta^*_{\tilde{\rho}}/2$ active stars $X^i_v$ of iteration $i$ get satisfied. Hence, it will follow that in iteration $j=i+\mathcal{O}(\log n)$, there remains no unsatisfied component that can be satisfied by at least $\Delta^*_{\tilde{\rho}}/2$ many active stars $X^j_v$, w.h.p. Thus, in each $\mathcal{O}(\log n)$ iterations, $\Delta^*_{\tilde{\rho}}$ decreases by a factor of $2$, w.h.p. After $\mathcal{O}(\log^2 n)$ iterations of the loop, there will be no $\tilde{\rho}$-augmented basic-star, and hence, no basic-star with efficiency $\tilde{\rho}$. Then we move to the next efficiency level, which is at most $\tilde{\rho}/2$. After $\mathcal{O}(\log^3 n)$ iterations, more than half of the components would be satisfied and we stop the phase.

\begin{lemma}\label{lem:cleanup}In the clean up step, if a unsatisfied component $\mathcal{C}$ could have been satisfied using only blue nodes, it indeed gets satisfied.
\end{lemma}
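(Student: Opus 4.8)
The plan is to prove the lemma by a direct case analysis: show that any component covered by the hypothesis always receives at least one non-empty proposal in step $\mathcal{S}8$, and that every non-empty proposal it can receive is a genuine bridge to another component of $\mathcal{F}$; since an unsatisfied component grants one of its non-empty proposals, it then gets satisfied.

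\smallskip
\noindent\textbf{Step 1 (normalizing the hypothesis).} First I would argue that ``$\mathcal{C}$ could have been satisfied using only blue nodes'' is equivalent to one of two concrete configurations: (a) there is a single blue node $v$ adjacent both to $\mathcal{C}$ and to some $\mathcal{C}'\in\mathcal{F}$ with $\mathcal{C}'\neq\mathcal{C}$; or (b) there are two adjacent blue nodes $v,w$ with $v$ adjacent to $\mathcal{C}$ and $w$ adjacent to some $\mathcal{C}'\in\mathcal{F}$, $\mathcal{C}'\neq\mathcal{C}$ (here a blue node is ``adjacent to'' a component when it is adjacent to a member of it). The reduction from an arbitrary bridge consisting of blue nodes $v_1,\dots,v_k$ on a path from $\mathcal{C}$ to $\mathcal{C}'$ uses only that $S$ is a dominating set: if $k\ge 2$, the node $v_2$ is adjacent to some $\mathcal{C}_2\in\mathcal{F}$; if $\mathcal{C}_2\neq\mathcal{C}$ then $(v_1,v_2)$ is already a configuration of type (b), and if $\mathcal{C}_2=\mathcal{C}$ then $v_2,\dots,v_k$ form a shorter blue-only bridge from $\mathcal{C}$ to $\mathcal{C}'$, so we recurse down to $k\le 2$.

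\smallskip
\noindent\textbf{Step 2 (producing a non-empty proposal).} I would then match (a)/(b) to the round-1/round-2 rules of step $\mathcal{S}8$. In case (a), $v$ is blue and adjacent to the unsatisfied component $\mathcal{C}$ and to a different component, so by the round-2 rule $v$ submits the proposal $\{v_{id}\}$ for $\mathcal{C}$. In case (b): if $v$ also happens to be adjacent to a component other than $\mathcal{C}$, we are back in case (a); otherwise $v$ is adjacent to $\mathcal{C}$ only, while $w$ is adjacent to $\mathcal{C}'\neq\mathcal{C}$, so in round 1 the node $w$ broadcast either the identifier of a single component $\neq\mathcal{C}$ or the ``two-or-more'' symbol --- exactly the condition under which $v$ submits the proposal $\{v_{id},w_{id}\}$ for $\mathcal{C}$. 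So in all cases $\mathcal{C}$ receives a non-empty proposal, hence grants one.

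\smallskip
\noindent\textbf{Step 3 (a granted proposal satisfies $\mathcal{C}$).} Finally I would check that every non-empty proposal $\mathcal{C}$ can receive is a valid bridge: it comes from a blue node $u$ adjacent to $\mathcal{C}$, and equals either $\{u\}$ with $u$ adjacent to some $\mathcal{C}'\neq\mathcal{C}$, or $\{u,w\}$ with $uw\in E$ and $w$ adjacent to some $\mathcal{C}''\neq\mathcal{C}$ (a ``two-or-more'' broadcast from $w$ again forces the existence of such a $\mathcal{C}''$). Thus once the one or two blue nodes of the granted proposal are coloured gray, $G[S\cup S']$ contains a path from $\mathcal{C}$ to a different component of $\mathcal{F}$, i.e.\ $\mathcal{C}$ becomes satisfied; and because colouring further blue nodes gray on behalf of other components only adds edges, no component that becomes satisfied in this step is ever un-satisfied. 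The one place that needs genuine care --- and the main obstacle --- is Step 1, together with the bookkeeping that each blue node knows, for every neighbour, which component of $\mathcal{F}$ (if any) that neighbour lies in; this is what makes the round-1 and round-2 messages carry the information the case analysis relies on. The rest is a mechanical match between the two configurations and the protocol's rules.
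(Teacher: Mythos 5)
Your proposal is correct and follows essentially the same route as the paper's proof: reduce the blue-only bridge to one or two blue nodes via the domination property, check that the round-1/round-2 rules of $\mathcal{S}8$ then force $v$ to submit a non-empty proposal to $\mathcal{C}$, and observe that any granted proposal connects $\mathcal{C}$ to some other component of $\mathcal{F}$. Your Steps 1 and 3 merely spell out in more detail what the paper compresses into the phrases ``as green-or-gray nodes dominate the graph'' and ``this will make it connect to one other component.''
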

\begin{proof}
Suppose that unsatisfied component $\mathcal{C}$ can be satisfied using only blue nodes. Then, as green-or-gray nodes dominate the graph, there is a component $\mathcal{C}'$ and either one blue node $v$ such that  $\mathcal{C}$-$v$-$\mathcal{C}'$ connects $\mathcal{C}$ to $\mathcal{C}'$or two adjacent blue nodes $v$ and $w$ such that $\mathcal{C}$-$v$-$w$-$\mathcal{C}'$ is a path connecting $\mathcal{C}$ to $\mathcal{C}'$. In the former case, $v$ clearly proposes to $\mathcal{C}$. In the latter case, $v$ will receive either the id of $\mathcal{C}'$ or the special symbol ``\emph{two-or-more}" from $w$. And thus again, in either case, $v$ proposes to $\mathcal{C}$. Therefore, component $\mathcal{C}$ will receive at least one proposal. Component $\mathcal{C}$ will accept one proposal, and this will make it connect to one other component $\mathcal{C}''$ (which might be equal to $\mathcal{C}'$) and hence satisfied.
\end{proof}

\begin{lemma}\label{lem:decreasing-degrees} For each unsatisfied component $\mathcal{C}\in \mathcal{F}$ and the almost-max-efficiency $\tilde{\rho}$, the number of $\tilde{\rho}$-augmented basic-stars $X^i_v$ selected in \emph{step $\mathcal{S}3$} that satisfy $\mathcal{C}$ does not increase from one iteration to the next.
\end{lemma}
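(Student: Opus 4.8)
The plan is to fix an unsatisfied component $\mathcal{C}$ and show that every $\tilde\rho$-augmented basic-star $X^{i+1}_v$ selected in iteration $i+1$ that satisfies $\mathcal{C}$ corresponds to some $\tilde\rho$-augmented basic-star $X^{i}_v$ selected in iteration $i$ that also satisfied $\mathcal{C}$, and moreover that this correspondence is injective (it will in fact be centered on the same node $v$, which makes injectivity automatic since at most one such star is picked per white node per iteration). So it suffices to show: if in iteration $i+1$ there is a $\tilde\rho$-augmented basic-star centered on $v$ that satisfies $\mathcal{C}$, then already in iteration $i$ there was a $\tilde\rho$-augmented basic-star centered on $v$ that satisfied $\mathcal{C}$. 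Here I would use that $\mathcal{C}$ must still be unsatisfied in iteration $i+1$ for the claim to be nonvacuous, hence it was unsatisfied throughout iterations $1,\dots,i+1$; and that the only relevant thing that changes from iteration $i$ to $i+1$ is that some white nodes became gray (components only grow).

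First I would unpack what it means for $X^{i+1}_v$ to satisfy $\mathcal{C}$: by \Cref{def:aug-star}, $X^{i+1}_v$ is obtained from a $\tilde\rho$-minimal basic-star (core-star) centered on $v$ by adding good auxiliary-legs, and by the Observation each white node $w\in X^{i+1}_v$ has all its unsatisfied adjacent components satisfied by the star. So $\mathcal{C}$ is adjacent to some white node $w \in X^{i+1}_v$, and $X^{i+1}_v$ connects $\mathcal{C}$ to another (possibly satisfied) component. The subtle case is exactly the one flagged in Challenge 2: maybe this connection exploits a component $\mathcal{C}'$ that only became satisfied/only attached to $v$'s neighborhood because some node went gray in iteration $i$. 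This is where the \textbf{clean up step (step $\mathcal{S}8$, via \Cref{lem:cleanup})} does the work. The key structural observation I would isolate: in a $\tilde\rho$-augmented basic-star, every non-center node $w$ has $\Cost(w) \le 2/\tilde\rho$ if it is an auxiliary leg, and every non-center node of the core-star has cost at most $1/\tilde\rho$ — this last point needs the $\tilde\rho$-minimality of the core-star (a node of cost exceeding $1/\tilde\rho$ in a minimal $\tilde\rho$-efficient basic-star would, since each non-center node of a basic-star contributes at most one newly-satisfied component, violate minimality). Actually I expect one needs slightly more care: I would argue that if $X^{i+1}_v$ satisfies $\mathcal{C}$ using only its core-star part, then that core-star consists of $v$ together with non-center nodes each of cost $\le 1/\tilde\rho$ (hence blue in the clean-up of iteration $i$), so $\mathcal{C}$ could have been satisfied in iteration $i$'s clean-up using only blue nodes — contradicting that $\mathcal{C}$ was still unsatisfied entering iteration $i+1$. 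And if $X^{i+1}_v$ needs an auxiliary leg to reach $\mathcal{C}$, then by definition of a good auxiliary leg $u$, $u$ is adjacent only to $\mathcal{C}$ (unsatisfied, not adjacent to the core), so $u$ cannot be what connects $\mathcal{C}$ to a different component; the actual connection to a different component still goes through the core-star part of $v$. In short, whatever connects $\mathcal{C}$ to another component inside $X^{i+1}_v$ lives (as far as $\mathcal{C}$'s satisfaction is concerned) in a small-cost sub-structure that would already have triggered the clean-up in iteration $i$.

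The plan then is: (i) assume for contradiction that $\mathcal{C}$ is satisfied by strictly more selected $\tilde\rho$-augmented basic-stars in iteration $i+1$ than in iteration $i$; since at most one such star is picked per center, there is a center $v$ whose iteration-$(i+1)$ star satisfies $\mathcal{C}$ but whose iteration-$i$ star did not (or did not exist); (ii) note $\mathcal{C}$ is unsatisfied throughout iterations $\le i+1$, in particular at the end of iteration $i$'s clean-up; (iii) extract from $X^{i+1}_v$ a witness set $W$ of white nodes, each of cost $\le 2/\tilde\rho$, sitting inside $X^{i+1}_v$, that already connected $\mathcal{C}$ to some other component at the start of iteration $i+1$ — and argue that the nodes of $W$ responsible for the connection had cost $\le 1/\tilde\rho$ and were already white-or-available and already adjacent to the relevant components at the time of iteration $i$'s clean-up (components only grow, so adjacency to a component in iteration $i+1$ that existed in $\mathcal{F}$ is witnessed by adjacency to that component's smaller iteration-$i$ version, because $\mathcal{F}$ is frozen — but here one must be careful that the *satisfied status* of the witnessing component $\mathcal{C}'$ could have changed; however a node adjacent to two distinct components of $\mathcal{F}$ in iteration $i$ is enough to satisfy $\mathcal{C}$ regardless of $\mathcal{C}'$'s status, and if in iteration $i+1$ the connection uses only one frozen component $\mathcal{C}'$ plus the fact that $\mathcal{C}'$ is satisfied, then $\mathcal{C}'$ already existed and was adjacent to the witness in iteration $i$ too); (iv) conclude via \Cref{lem:cleanup} that $\mathcal{C}$ would have been satisfied in iteration $i$'s clean-up, contradicting (ii).

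\textbf{The main obstacle} I anticipate is step (iii): carefully showing that the ``reason'' $X^{i+1}_v$ can satisfy $\mathcal{C}$ is already a *blue* witness present at clean-up time of iteration $i$ — i.e., that the extra connectivity $X^{i+1}_v$ exploits either was already there in iteration $i$, or is irrelevant to $\mathcal{C}$ because it comes from an auxiliary leg (which by definition touches only $\mathcal{C}$ and so cannot provide the cross-component link), or comes from a core-star node of cost exceeding $1/\tilde\rho$ — and ruling this last possibility out by invoking $\tilde\rho$-minimality together with the fact (from the Observation and \Cref{def:basic-star}) that each non-center node of a basic-star is responsible for at most one unsatisfied component. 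Pinning down that the non-center node of the core-star actually realizing the $\mathcal{C}$-to-$\mathcal{C}'$ connection has cost $\le 1/\tilde\rho$, and that auxiliary legs genuinely cannot help, is the crux; everything else is bookkeeping about $\mathcal{F}$ being frozen and components only growing.
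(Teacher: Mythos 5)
Your high-level plan coincides with the paper's: fix the center $v$, argue that if the iteration-$(i+1)$ star at $v$ satisfies $\mathcal{C}$ then the iteration-$i$ star at $v$ existed and already satisfied $\mathcal{C}$ (injectivity is then automatic), and use the clean-up step via \Cref{lem:cleanup} to kill newly-useful cores. However, the step you yourself flag as the crux is where the argument as sketched does not go through, in two places.

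First, the auxiliary-leg case cannot be routed through the clean-up lemma at all. If $\mathcal{C}$ is attached to $X^{i+1}_v$ only via a node $w\in X^{i+1}_v\setminus X^{i}_v$ (in particular a new good auxiliary leg), then $w$ may have cost up to $2/\tilde\rho$ — not blue — and, more fundamentally, the path realizing $\mathcal{C}$'s satisfaction runs $\mathcal{C}$--$w$--$v$--$\cdots$--$\mathcal{C}'$ through the whole star, which is longer than the one- or two-blue-node paths that step $\mathcal{S}8$ can discover. So no contradiction with ``$\mathcal{C}$ is still unsatisfied'' is available here. The paper closes this case with a different argument: it shows that such a $w$ met every condition of a good auxiliary leg for the iteration-$i$ core except possibly $\Cost(w)\le 2/\tilde\rho$, and then derives that cost bound anyway (if $\Cost(w)>2/\tilde\rho$, the $\tilde\rho$-efficiency of $X^{i+1}_v$ forces it to satisfy at least $3$ components, so dropping $w$ leaves a smaller $\tilde\rho$-efficient star, meaning $w$ was added as an auxiliary leg and hence $\Cost(w)\le 2/\tilde\rho$ — contradiction). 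Without this, the count can increase simply because an old core acquires a new leg reaching $\mathcal{C}$.

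Second, your blueness argument for the core is incomplete: the per-node bound $c(w)\le 1/\tilde\rho$ from $\tilde\rho$-minimality applies only to non-center nodes (removing the center does not leave a star centered on $v$), so the center may fail to be blue, and the clean-up needs the specific one or two nodes on the $\mathcal{C}$-to-$\mathcal{C}'$ path to be blue. The paper avoids this by confining the clean-up argument to the scenario where the core was useless in iteration $i$ and useful in iteration $i+1$: there the core satisfies exactly one unsatisfied component, so $\tilde\rho$-efficiency bounds its \emph{total} cost (center included) by $1/\tilde\rho$, and one further shows every core node is adjacent to that single component, which is exactly what yields a path of one or two blue nodes for \Cref{lem:cleanup}. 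You would need to supply both of these missing pieces to complete the proof.
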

\begin{proof} Fix an iteration $i\geq 2$. We claim that if an unsatisfied component $\mathcal{C}\in \mathcal{F}$ can be satisfied by a $\tilde{\rho}$-augmented basic-star $X^i_v$ centered on $v$ that was selected  in step $\mathcal{S}3$ of iteration $i$, then there was one $\tilde{\rho}$-augmented basic-star $X^{i-1}_v$ centered on $v$ that was selected in step $\mathcal{S}3$ of iteration $i-1$ and $\mathcal{C}$ could be satisfied by $X^{i-1}_v$ as well. This then directly leads to the lemma.

We first show that it cannot be the case that the core $X''$ of $X^i_v$ (see step $\mathcal{S}3$ and Definition \ref{def:aug-star} for definition of core) was a useless star in iteration $i-1$. Having this established, we then show that $X^{i-1}_v$ could satisfy $\mathcal{C}$, as well. 

First, for the sake of contradiction, suppose that the core basic-star $X''\subseteq X^i_v$ was useless in iteration $i-1$. Suppose that $X''$ could satisfy unsatisfied component $\mathcal{C}'$ in iteration $i$. Component $\mathcal{C}'$ might be equal to $\mathcal{C}$ or not. As $X''$ is useful in iteration $i$ but not in iteration $i-1$, it means that there is a node $u$ that was white at the start of iteration $i-1$ and it became gray at the end of that iteration and such that $u$ connects $X''$ to a now satisfied component $\mathcal{C}''\neq \mathcal{C}'$. 
%
%
%
In iteration $i$, $X''$ cannot be adjacent to two unsatisfied components as then it would be useful in iteration $i-1$. As in iteration $i$ basic-star $X''$ is $\tilde{\rho}$-efficient and it satisfies only one unsatisfied component, we get that the total cost of nodes in $X''$ is at most $\frac{1}{\tilde{\rho}}$. Hence, all nodes of $X''$ were blue in the clean up step of iteration $i-1$. Furthermore, $u$ was either gray at the start of the clean up step of iteration $i-1$ or it was a blue node in that step and then it became gray. We know that $u$ is not adjacent to $\mathcal{C}'$ (otherwise $\mathcal{C}'$ would be satisfied). But, we know that each node of $X''$ must be adjacent to at least one green node, and in iteration $i-1$, $X''$ could not have been adjacent to more than one component (otherwise it would be useful for satisfying $\mathcal{C}'$). Thus, we get that each node of $X''$ is adjacent to component $\mathcal{C}'$. Therefore, $\mathcal{C}'$ could have been satisfied using only one or two blue nodes: either with one blue node of $X''$ connecting it to $u$ which was gray then, or with one blue node of $X''$ and node $u$ which was blue then. Hence, \Cref{lem:cleanup} gives that $\mathcal{C}'$ must have been satisfied at the end of iteration $i-1$ (perhaps through a different path). This is in contradiction with $X''$ having $\mathcal{C}'$ as its unsatisfied adjacent component in iteration $i$. Thus, we conclude that the core basic-star $X''$ was useful in iteration $i-1$.

Note that if an unsatisfied component is adjacent to $X''$ in iteration $i$, it was adjacent to $X''$ in iteration $i-1$ as well. Hence, the number of unsatisfied components that could be satisfied by $X''$ in iteration $i-1$ is at least as many as those that could be satisfied in iteration $i$. This establishes that $X''$ was at least $\tilde{\rho}$-efficient in iteration $i-1$. Thus, indeed there was a $\tilde{\rho}$-augmented basic-star $X^{i-1}_v$ centered on $v$ and selected in step $\mathcal{S}3$ of iteration $i-1$. It remains to show that $X^{i-1}_v$ could satisfy $\mathcal{C}$.

For the sake of contradiction, suppose that $X^{i-1}_v$ was not adjacent to $\mathcal{C}$ (as otherwise we would be done). It means that there is another white node $w\in X^i_v\setminus X^{i-1}_{v}$ that connects $\mathcal{C}$ to $v$. Also, $\mathcal{C}$ is the only component in $\mathcal{F}$ that is adjacent to $w$ as otherwise, $w$ would have been self-sufficient and hence it would not report $\mathcal{C}$ to $v$ in iteration $i$ and thus it would not be in $X^{i}_v$ (recall the definition of basic-star). Therefore, we know that in iteration $i-1$, $\{w\}$ could have potentially been a good auxiliary-leg for the core of $X^{i-1}_v$. As $\{w\}$ was not included in $X^{i-1}_v$, we know $\{w\}$ was not a good auxiliary-leg. As $X^{i-1}_v$ is not adjacent to $\mathcal{C}$, from \Cref{def:aug-star} we can infer it must have been the case that $\Cost(w)>2/\tilde{\rho}$. But now in iteration $i$, basic-star $X^i_v$ which includes $w$ and thus has cost strictly greater than $2/\tilde{\rho}$ has efficiency $\tilde{\rho}$. So, $X^i_v$ must satisfy at least $3$ components. But then, even if we discard $w$ from star $X^i_v$, we get a smaller $\tilde{\rho}$-efficient basic-star. Hence, $w$ was not included in the core $X''$ of $X^i_v$, which means that $w$ was included in $X^i_v$ as a good auxiliary-leg, showing that $\Cost(w)\leq 2/\tilde{\rho}$, which is a contradiction. Having arrived at the contradiction from the assumption that $X^{i-1}_v$ was not adjacent to $\mathcal{C}$, we get that $X^{i-1}_v$ must indeed have been adjacent to $\mathcal{C}$. That is, the $\tilde{\rho}$-augmented basic-star centered on $v$ in iteration $i-1$ could have satisfied $\mathcal{C}$. This completes the proof. 
\end{proof}

\begin{lemma}\label{lem:progress} In each iteration $i$, the set of grayed stars has efficiency within a constant factor of the max-efficient basic-star. Furthermore, with at least a constant probability, a constant fraction of the components that can be satisfied by at least $\Delta^*_{\tilde{\rho}}/2$ many $\tilde{\rho}$-augmented basic-stars $X^i_v$ get satisfied.
\end{lemma}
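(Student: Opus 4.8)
The lemma has two parts; the first is a short cost‑accounting and the second is the substantive one, a Berger--Rompel--Shor‑style analysis of the randomized parallel greedy step in which the marking probability $\tfrac{1}{5\Delta^*_{\tilde\rho}}$ of \emph{step $\mathcal{S}5$} is calibrated precisely to defeat double counting.

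\textbf{Efficiency of the grayed stars.} By \Cref{def:aug-star} and the observation immediately following it, every star that survives \emph{step $\mathcal{S}3$} --- hence every star colored gray in \emph{step $\mathcal{S}7$} --- is $\tilde\rho/2$‑efficient, and $\tilde\rho\ge\rho^*/2$ by construction. A star is grayed only when at least a third of the proposals it actually submitted are granted, and each component grants at most three proposals; so, writing $N_{\mathrm{sat}}$ for the number of components newly satisfied this iteration, counting the incidences (grayed star, granting component) gives at most $3N_{\mathrm{sat}}$ one way and at least a third of the total number of proposals submitted by grayed stars the other way, whence grayed stars submit $O(N_{\mathrm{sat}})$ proposals in all. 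Since $\Cost(X)\le\frac{2}{\tilde\rho}|\Phi(X)|$ for any $\tilde\rho/2$‑efficient $X$, and --- modulo the ``at most three'' truncation of \emph{step $\mathcal{S}5$}, which the analysis below shows discards only a small constant fraction of a marked star's intended proposals --- $|\Phi(X)|$ is within a constant factor of the number of proposals $X$ submits, we get $\sum_{X\text{ grayed}}\Cost(X)=O(N_{\mathrm{sat}}/\tilde\rho)$. Thus the grayed stars newly satisfy $\Omega(\tilde\rho)=\Omega(\rho^*)$ components per unit of cost, exactly the bound that \Cref{lem:efficiency-LB} invokes.

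\textbf{Progress on the high‑degree components.} Let $\mathcal{H}$ be the set of still‑unsatisfied components whose active‑degree (the count computed in \emph{step $\mathcal{S}4$}) is at least $\Delta^*_{\tilde\rho}/2$, and fix $\mathcal{C}\in\mathcal{H}$ with active‑degree $d_{\mathcal{C}}\in[\Delta^*_{\tilde\rho}/2,\Delta^*_{\tilde\rho}]$, so that exactly $d_{\mathcal{C}}$ of the active stars $X^i_v$ have $\mathcal{C}\in\Phi(X^i_v)$. These stars are marked mutually independently, each with probability $p=\tfrac1{5\Delta^*_{\tilde\rho}}$, so the expected number of them marked is $d_{\mathcal{C}}p\in[\tfrac1{10},\tfrac15]$. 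The one structural fact I use is that $\mathcal{C}$ gets satisfied as soon as \emph{some} marked star $X$ with $\mathcal{C}\in\Phi(X)$ is colored gray: once all of $X$ is gray, every component of $\Phi(X)$ --- in particular $\mathcal{C}$ --- is connected to another component of $\mathcal{F}$, and $\mathcal{F}$ is frozen so that other component is still present. Hence it suffices to bound $\Pr[\text{some marked }X\text{ with }\mathcal{C}\in\Phi(X)\text{ gets grayed}]$ below by an absolute constant, and I would do so via $\sum_{X:\,\mathcal{C}\in\Phi(X)}\Pr[\,X\text{ is the only marked one among }\mathcal{C}\text{'s }d_{\mathcal{C}}\text{ stars}\,]\cdot\Pr[\,X\text{ grayed}\mid\text{that event}\,]$. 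The first factors sum to $d_{\mathcal{C}}p(1-p)^{d_{\mathcal{C}}-1}\ge\tfrac1{10}(1-p)^{\Delta^*_{\tilde\rho}}\ge\tfrac1{10}(1-p\Delta^*_{\tilde\rho})=\tfrac1{10}\cdot\tfrac45=\tfrac2{25}$ (Bernoulli's inequality, using $d_{\mathcal{C}}p\ge\tfrac1{10}$ and $d_{\mathcal{C}}\le\Delta^*_{\tilde\rho}$). For the second factor, condition on $X$ marked and all other $\mathcal{C}$‑stars unmarked (which only lessens contention, hence only helps); for each $\mathcal{C}'\in\Phi(X)$, $X$'s proposal to $\mathcal{C}'$ \emph{fails} only if either the node made responsible for $\mathcal{C}'$ inside $X$ is also made responsible for $\mathcal{C}'$ by at least three \emph{other} marked stars (so the \emph{step $\mathcal{S}5$} truncation drops $X$'s proposal) or at least three other marked stars propose to $\mathcal{C}'$ (so $\mathcal{C}'$ grants others in \emph{step $\mathcal{S}6$}); in each case at most $\Delta^*_{\tilde\rho}-1$ stars are candidates, each marked independently with probability $p$, so each of these events has probability at most $\binom{\Delta^*_{\tilde\rho}-1}{3}p^3\le\tfrac{(\Delta^*_{\tilde\rho}p)^3}{6}=\tfrac1{750}$, whence $X$'s proposal to $\mathcal{C}'$ fails with probability at most $\tfrac1{375}$. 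By linearity the expected number of failing proposals among $\Phi(X)$ is at most $\tfrac1{375}|\Phi(X)|$, so by Markov, with probability at least $1-\tfrac1{125}$ fewer than $\tfrac13|\Phi(X)|$ fail; as $|\Phi(X)|$ is at least the number of proposals $X$ submits, this forces more than a third of the submitted proposals to be granted, i.e.\ $X$ is grayed. Therefore $\Pr[\,X\text{ grayed}\mid\cdot\,]\ge\tfrac{124}{125}$, so $\Pr[\mathcal{C}\text{ satisfied}]\ge\tfrac2{25}\cdot\tfrac{124}{125}=:c_0>0$. Since this holds for every $\mathcal{C}\in\mathcal{H}$, the number $Z$ of components of $\mathcal{H}$ satisfied this iteration has $\E[Z]\ge c_0|\mathcal{H}|$ while $Z\le|\mathcal{H}|$, and $c_0|\mathcal{H}|\le\E[Z]\le\tfrac{c_0}{2}|\mathcal{H}|+|\mathcal{H}|\cdot\Pr[Z\ge\tfrac{c_0}{2}|\mathcal{H}|]$ then yields $\Pr[Z\ge\tfrac{c_0}{2}|\mathcal{H}|]\ge\tfrac{c_0}{2}$, the asserted ``constant fraction with constant probability''.

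\textbf{The main obstacle.} The genuinely delicate step is the estimate $\Pr[\,X\text{ grayed}\mid X\text{ marked}\,]=\Omega(1)$: a priori arbitrarily many efficient stars may all aim at the same handful of components, and since a component grants only three proposals a marked star can be ``starved'' below the $1/3$‑granted threshold. Marking with probability $\tfrac1{5\Delta^*_{\tilde\rho}}$ is exactly what makes the expected number of marked stars contending for any single component less than $\tfrac15$, so that a third‑moment union bound of the shape ``$\binom{\Delta^*_{\tilde\rho}}{3}p^3$ is a small constant'' controls, simultaneously, both this starvation and the loss from the \emph{step $\mathcal{S}5$} truncation; making these two estimates line up --- and having the first part's ``$|\Phi(X)|\approx$ number of submitted proposals'' rest on the same bound --- is where the real work sits. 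The incidence counting of the first part and the reverse‑Markov step are routine.
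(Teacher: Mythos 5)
Your proof is correct, and its skeleton matches the paper's: bound the grayed-star efficiency via the ``grants at most $3$ / needs $1/3$ granted'' accounting, show each large-active-degree component is satisfied with constant probability, then convert to ``a constant fraction with constant probability'' by reverse Markov. Where you genuinely diverge is the middle step. The paper computes the probability that a fixed marked star is grayed ($\geq 0.03/\Delta^*_{\tilde\rho}$, via the $\binom{\Delta^*_{\tilde\rho}}{4}p^4$ contention bound plus Markov), concludes the \emph{expected} number of grayed stars satisfying $\mathcal{C}$ is $\Omega(1)$, and then converts that expectation into a constant probability of ``at least one'' by arguing the distribution has an exponentially decaying tail. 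You instead decompose over the disjoint events ``$X$ is the unique marked star among $\mathcal{C}$'s $d_{\mathcal{C}}$ stars,'' lower-bound their total mass by $d_{\mathcal{C}}p(1-p)^{d_{\mathcal{C}}-1}\geq 2/25$, and show the surviving star is grayed with probability $\geq 124/125$ under that conditioning (justifying the monotonicity of the conditioning). This sidesteps the paper's expectation-versus-tail argument entirely and yields an explicit constant; it is arguably the cleaner route, at the cost of having to check that conditioning on other $\mathcal{C}$-stars being unmarked only reduces contention at the components of $\Phi(X)$, which you do. Two smaller remarks: you are more explicit than the paper in treating the step-$\mathcal{S}5$ truncation (dropped proposals at a shared responsible node) as a separate failure mode alongside step-$\mathcal{S}6$ over-granting, which is a genuine subtlety the paper folds into a single ``more than $3$ stars marked'' event; and in the first part you hedge the identification of $|\Phi(X)|$ with the number of submitted proposals on the probabilistic analysis, whereas the paper's (terser) argument is deterministic under the reading that the $1/3$ threshold in step $\mathcal{S}7$ refers to the star's intended proposals, one per component of $\Phi(X)$. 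Neither point is a gap.
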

\begin{proof}
For the first part, note that a basic-star joins if it is almost max-efficient and at least $1/3$ of its proposals are granted, and each component grants at most $3$ proposals. It follows that the set of grayed stars has efficiency within a constant factor of the max-efficient basic-star.
 
For the second part, first note that the probability that an almost-max-efficient star $X$ is marked active and at least $1/3$ of its proposals are accepted is $\Theta(1/\Delta^*)$. This is because, $X$ is marked active with probability $1/(5\Delta^*)$ and then, for each unsatisfied component $\mathcal{C}$ that gets satisfied by $X$, the probability that more than $3$ stars satisfying $\mathcal{C}$ are marked is at most $\binom{\Delta^*_{\tilde{\rho}}}{4} (\frac{1}{5\Delta^*_{\tilde{\rho}}})^4\leq (\frac{e}{5})^4< 1/10$. Hence, the expected fraction of the unaccepted proposals of $X$ is at most $1/10$, which using Markov's inequality means that the probability that more than $2/3$ are unaccepted is at most $3/20$. Therefore, the probability that $X$ is marked active and at least $1/3$ of its proposals are accepted is at least $0.03/\Delta^*$.

Call a component \emph{large-degree} if it can be satisfied by at least $\Delta^*_{\tilde{\rho}}/2$ many almost-max-efficient star stars. We get that for each large-degree unsatisfied component $\mathcal{C}$, the expected number of stars that satisfy $\mathcal{C}$ and get colored gray is at least $1/100$. On the other hand, the probability that there are $z$ stars that satisfy $\mathcal{C}$ and are colored gray (which shows that they are marked active) decays exponentially with $z$, as it is at most $\binom{\Delta^*_{\tilde{\rho}}}{z} (\frac{1}{5\Delta^*_{\tilde{\rho}}})^z\leq (\frac{e}{5})^z$.  It follows that with at least a constant probability, one or more of stars that satisfy $\mathcal{C}$ gets colored gray. This is because otherwise, only an $\eps$ of the total probability mass is on $z\geq 1$, for a sub-constant $\eps$, which given the exponentially decaying tail, it would contradict with the expectation being at least constant $1/100$. Hence, we get that $\mathcal{C}$ gets satisfied with at least a constant probability.  

It follows from an application of Markov's inequality that with at least a constant probability, at least a constant fraction of large-degree components get satisfied, finishing the proof.
\end{proof}
}
\medskip

\fullOnly{
\section{Open Problems and Future Work}
\label{app:open}
This paper presents a distributed $\mathcal{O}(\log n)$ approximation algorithm for the \MCDS problem in $\tilde{\mathcal{O}}(D+\sqrt{n})$ rounds of the \congest model. 

As mentioned before, \MCDS is $\mathsf{NP}$-hard and if one assumes that nodes can only perform polynomial-time computations (which is a practically reasonable assumption and also a usual one\cite{Dubhashi-03, Jia-Rajmohan-Suel, Kuhn-Wattenhofer-03}), the $\mathcal{O}(\log n)$ approximation factor is optimal up to a constant factor, unless $\mathsf{P} = \mathsf{NP}$. The author finds it quite an intriguing question to see if one can get an $o(\log n)$ approximation in a non-trivial number of rounds, by relaxing this assumption about local computations. However, this question might be only of theoretical interest.

In the current presentation of the algorithm, we have not tried to optimize the constant in the approximation factor. However, it is not clear how to get a $(1+o(1))\log n$ approximation and that is another interesting question. 

The author started looking into the \MCDS problem with the hope of solving it---i.e., finding an $\mathcal{O}(\log n)$ approximation for it---in $\tilde{\mathcal{O}}(D+\sqrt{n})$ rounds of a more restricted version of the \congest model where in each round, each node can send one $\mathcal{O}(\log n)$-bits message to all of its neighbors. Notice that the same message has to be sent to all neighbors. This model is called $\mathsf{VCONGEST}$ as the congestion is on vertices, rather than on edges.  Note that this restriction is natural in node-capacitated networks, and \MCDS is also more important in such settings. It would be interesting to see if an $\tilde{\mathcal{O}}(D+\sqrt{n})$-rounds $\mathcal{O}(\log n)$-approximation for \MCDS can be found in $\mathsf{VCONGEST}$. 
}
 
\subsection*{Acknowledgment} We thank Fabian Kuhn for valuable discussions. We also thank Stephan Holzer and Christoph Lenzen for helpful comments about the presentation.

This work was supported by Simons award for graduate students in theoretical Computer Science (number 318723), AFOSR contract number FA9550-13-1-0042, NSF award 0939370-CCF, NSF award CCF-1217506, and NSF award CCF-AF-0937274.


{
\shortOnly{\small}
\bibliographystyle{abbrv}
\bibliography{ref}
}

\fullOnly{
\appendix

\section{A Comment on the Unweighted Case of \MCDS}
\label{app:unweighted}
Here, we briefly comment that the unweighted case of \MCDS, where all nodes have equal weight, is quite different from the weighted case and it can be approximated locally. More precisely, an $\mathcal{O}(\log \Delta)$ factor approximation---where $\Delta$ is the maximum degree---exists, which uses only $\mathcal{O}(\log ^{2} n)$ rounds of the \congest model. 

Notice that similarly, the unweighted case of \MST is different from the weighted case in the sense that, a spanning connected selection of $O(n)$ edges (which is like a constant approximation of \MST in the unweighted case) can be found locally \footnote{Note that on the other hand, any approximation for a connected spanning subgraph in the weighted case needs $\tilde{\Omega}(D+\sqrt{n})$ rounds\cite{DasSarma-11}.}. More precisely, there is an algorithm that finds a set of $O(n)$ edges connecting the whole graph, in $\mathcal{O}(\log n \cdot 2^{\log^* n})$ rounds of the \congest model; this is the linear skeleton algorithm of Pettie\cite{Pettie-Skeleton}.

Going back to \MCDS in the unweighted case, let us first briefly recap on the algorithm of \cite{Dubhashi-03} which finds an $\mathcal{O}(\log \Delta)$ approximation in $\mathcal{O}(\log ^{2} n)$ rounds of the \local model: First, use the $\mathcal{O}(\log^{2} n)$ rounds dominating set  $\mathcal{O}(\log \Delta)$ approximation algorithm of \cite{Jia-Rajmohan-Suel}. Suppose that this dominating set is called $S$. Then, the remaining problem is to add $\mathcal{O}(\log \Delta)$ nodes to $S$ and achieve connectivity. Consider the virtual graph $\mathcal{H}_S = (S, E_S)$ on the dominating set $S$ where two nodes $v, u\in S$ are adjacent--- that is, $e=(v, u) \in E_S$---if they are within distance $3$ of each other in $G$. It is easy to see that this graph connected. The remaining task is to pick only $O(S)$ edges of this subgraph (each edge contains at most two nodes), while ensuring connectivity. For this, Dubhashi et al. follow the famous strategy that, if one destroys cycles of length $\mathcal{O}(\log n)$, graph has at most linear many edges. Destroying cycles of length $\mathcal{O}(\log n)$ is easy in the \local model; each node learns its $\mathcal{O}(\log n)$-neighborhood. Then, it throws away each of its incident edges if the edge has largest id (the id of an edge is formed by concatenating the ids of its two endpoint, the larger first) is the smallest edge-id in a cycle of length $2\log n+1$.

\begin{figure}[t]
	\centering
		\includegraphics[width=0.70\textwidth]{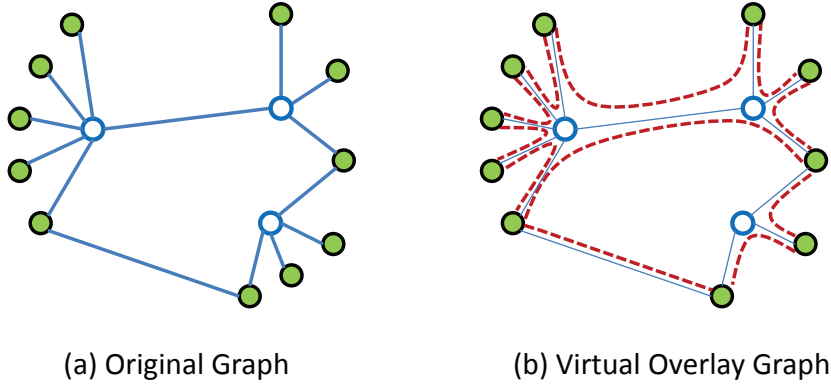}
	\caption{{\scriptsize The overlay graph between the dominating nodes. The edges of the original graph are presented as solid blue lines (on the left) and those of the overlay graph are presented as dashed red lines (on the right). }}
	\vspace{-10pt}
	\label{fig:Transformation}
\end{figure}

Now let us see how one can turn this idea to work in the \congest model. The dominating set approximation algorithm of Jia et al.\cite{Jia-Rajmohan-Suel} is already in the \congest model. Furthermore, one can substitute the simple \emph{cycle-destroying} part of \cite{Dubhashi-03} with the more sophisticated \emph{linear skeleton} algorithm of Pettie\cite{Pettie-Skeleton} that works in $\mathcal{O}(\log n \cdot 2^{\log^* n})$ rounds of the \congest model. The only remaining issue is that, we have to ensure that the \congest model allows us to run Pettie's algorithm on the virtual graph $\mathcal{H}_S$. This is not possible with the current definition of $\mathcal{H}_S$ as each edge of graph $G$ might be in many edges of $\mathcal{H}_S$. But the remedy is simple, we just make a small change in the definition of $\mathcal{H}_S$: For each node $w\in V\setminus S$, let $w$ order its neighbors in $S$ arbitrarily, say $s^w_1$, $s^w_2$, \dots, $s^w_\ell$. Now, define $E'_S$ as follows: for each $w\in V\setminus S$, add the following edges: $(s^{w}_1, s^{w}_2)$, $(s^{w}_2, s^{w}_3)$, \dots, $(s^{w}_{\ell-1}, s^{w}_{\ell})$. Furthermore, for each two $G$-neighboring nodes $w, w' \in V\setminus S$, which have $S$-neighbors respectively $s^w_1$, $s^w_2$, \dots, $s^w_\ell$ and $s^{w'}_1$, $s^{w'}_2$, \dots, $s^{w'}_{\ell'}$, we put two edges in $E'_S$: $(s^{w}_1, s^{w'}_1)$ and $(s^{w}_\ell, s^{w'}_{\ell'})$. We note that just one of these edges would be enough here but we add two to keep the symmetry. Also, if two nodes in $S$ are neighbors, put an $E'_S$-edge between them. Now define the new virtual graph to be simply $\mathcal{H}'_S=(S, E'_S)$. \Cref{fig:Transformation} shows a simple example. It is easy to see that each edge of this virtual graph goes through at most two nodes of $G$, $\mathcal{H}'_S$ is connected. Furthermore, each edge of $G$ is used in at most two edges of $\mathcal{H}'_S$. Hence, each communication round on $\mathcal{H}'_S$ can be simulated by two communication rounds on $G$. Therefore, the issue of congestion on the virtual graph is fixed.

\section{Round Complexity Lower Bound}
\label{app:LB}
Here, we mention the simple observation that the techniques of Das Sarma et al.~\cite{DasSarma-11} imply a $\tilde{\Omega}(D+\sqrt{n})$ rounds lower bound for any approximation of \MCDS. For simplicity, we only explain an $\Omega(\sqrt{n}/\log n)$-round lower bound for the case where $D=\mathcal{O}(\log n)$ and in fact we will just sketch the changes. We encourage the interested reader to see \cite{DasSarma-11} for the details and generalization to other diameter values.

\begin{observation} For any polynomial $\alpha(n)$, there is a constant $\eps>0$ such that any $\alpha(n)$-approximation algorithm for the minimum-weight connected dominating set that has error-probability at most $\eps$ requires at least $\Omega(\sqrt{n}/\log n)$ rounds of the \congest model, on a graph that has diameter $D=\mathcal{O}(\log n)$.
\end{observation}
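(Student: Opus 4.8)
The plan is to apply the lower-bound framework of Das Sarma et al.~\cite{DasSarma-11}, which reduces the two-party set-disjointness communication problem to distributed graph problems through a ``spooky'' graph gadget of diameter $\mathcal{O}(\log n)$ in which a cut of size $\Theta(\log n)$ separates Alice's side from Bob's side, so that any $o(\sqrt{n}/\log n)$-round \congest algorithm transmits $o(\sqrt n)$ bits across the cut and cannot solve set-disjointness on $\Theta(\sqrt n)$-bit inputs. What I would do is take that same graph essentially verbatim and only change how the players' input bits are encoded into the instance, plus add a cheap ``connector'' structure so that the optimal MCDS value is separated by more than an $\alpha(n)$ factor between the YES and NO cases of disjointness.

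Concretely, first I would recall the gadget: a path-like ``highway'' of logarithmic diameter with $\Theta(\sqrt n)$ parallel tracks, where Alice controls one end-block and Bob the other, and the only way information crosses is through the $\Theta(\log n)$-edge bottleneck. Next I would fix the weight assignment: give every vertex weight $1$ except for $\Theta(\sqrt n)$ designated ``expensive'' vertices associated with the input coordinates, which get weight $n^{c}$ for a large enough constant $c$ (using that weights may be polynomial and fit in one message). I would wire things so that coordinate $i$ being present in \emph{both} Alice's and Bob's sets (i.e.\ the sets intersect) forces the corresponding expensive vertex into every connected dominating set — e.g.\ by making that vertex a cut vertex, or the unique dominator of some private leaf, precisely in the ``intersecting'' configuration — whereas if the sets are disjoint, a cheap all-weight-$1$ CDS of size $\poly(n)$ but total cost $\poly(n) \ll n^c$ exists. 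Then $\mathsf{OPT}$ in the disjoint case is at most some polynomial $p(n)$, while in the intersecting case $\mathsf{OPT} \ge n^c$; choosing $c$ with $n^c > \alpha(n)\cdot p(n)$, any $\alpha(n)$-approximation algorithm distinguishes the two cases, hence decides set-disjointness.

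Finally I would run the standard simulation argument: an $r$-round \congest algorithm on this graph can be simulated by Alice and Bob exchanging $\mathcal{O}(r \cdot \log n \cdot \log n)$ bits (one $\mathcal{O}(\log n)$-bit message per round per bottleneck edge, $\Theta(\log n)$ bottleneck edges), so the $\Omega(\sqrt n)$ communication lower bound for set-disjointness on $\Theta(\sqrt n)$-bit inputs (with constant error $\eps$) forces $r = \Omega(\sqrt n/\log n)$; the diameter is $\mathcal{O}(\log n)$ by construction, and the bound extends to general $D$ by lengthening the highway exactly as in~\cite{DasSarma-11}. The only real obstacle is the gadget design in the middle step: one must choose the reduction so that a single present-in-both coordinate deterministically blows up $\mathsf{OPT}$ while a disjoint instance keeps $\mathsf{OPT}$ small, \emph{and} so that the expensive vertices and leaves do not themselves shrink the diameter below $\omega(\log n)$-irrelevance or create shortcuts across the bottleneck — this is a matter of carefully attaching the input-encoding vertices as pendant/local structures rather than along new cross-cut paths, which is routine but is where all the care goes. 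Since the excerpt only asks for the observation (and defers details to~\cite{DasSarma-11}), I would keep the write-up at the level of sketching these changes rather than reproducing the full construction.
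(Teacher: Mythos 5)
Your overall strategy is the same as the paper's: reduce two-party set-disjointness on $\Theta(\sqrt n)$-bit inputs to $\alpha(n)$-approximate \MCDS on the Das Sarma et al.~\cite{DasSarma-11} lower-bound graph by encoding the inputs into node weights, and invoke their Simulation Theorem. The gap is that you defer exactly the one step that is specific to \MCDS: you never construct the weight assignment under which an intersecting coordinate provably forces a heavy vertex into every CDS, and you explicitly flag this as ``where all the care goes.'' No new pendant vertices, private leaves, or connector structures are needed. Take the graph as is: $\sqrt n$ parallel paths of length $\sqrt n$, a tree of depth $\log n - 1$ whose leaves are aligned with the columns, and edges from each leaf to all path nodes in its column; Alice is the leftmost leaf, Bob the rightmost. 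Give weight $1$ to the internal tree nodes, to Alice and Bob, and to all internal path nodes; give weight $M = \alpha(n)\cdot n + 1$ to every other leaf of the tree; and put the inputs on the path endpoints, so the leftmost node of path $i$ gets weight $M$ iff $i \in \mathcal{X}$ and the rightmost iff $i \in \mathcal{Y}$. The internal nodes of path $i$ can reach the rest of the graph only through a tree leaf or through one of the two endpoints of path $i$; the only cheap leaves are Alice and Bob, who see only the first and last columns, so a cheap connection for path $i$ exists iff $i \notin \mathcal{X}$ or $i \notin \mathcal{Y}$. Hence $\mathsf{OPT} \geq M > \alpha(n)\cdot n$ if the sets intersect and $\mathsf{OPT} < n$ if they are disjoint, which is exactly the gap your argument needs.

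Two smaller points. First, your account of the simulation cost as ``one message per round per bottleneck edge, with $\Theta(\log n)$ bottleneck edges'' is not literally correct: there is no $O(\log n)$-edge cut separating Alice's side from Bob's (the middle cut contains $\sqrt n$ path edges). The $O(T\log^2 n)$-bit bound comes from the Simulation Theorem itself, which uses the fact that in $T < \sqrt n/2$ rounds information cannot traverse the long paths, so only the tree traffic near a shrinking frontier must be exchanged; since both you and the paper invoke that theorem as a black box this is harmless, but the cut-based intuition would not survive scrutiny. Second, to turn ``the algorithm distinguishes the two cases'' into ``Alice and Bob decide disjointness,'' you should, as the paper does, assume without loss of generality that every node learns the total weight of the output CDS (an extra $O(D)$ rounds), and note that Alice and Bob can install the input-dependent weights in one round because those weights sit on their own neighbors.
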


The general approach (following \cite{DasSarma-11}) is that we present a graph and show that one can encode instances of two-party set disjointness on $\sqrt{n}$-bits in the node-weights of this graph such that the following holds: if there is an $\alpha(n)$-approximation algorithm $\mathcal{A}$ for the minimum-weight connected dominating set problem that has error-probability at most $\eps$ and uses $T \leq \Omega(\sqrt{n}/\log n)$ rounds, then there is a randomized algorithm for two party set on instances with $\sqrt{n}$-bits inputs with error-probability at most $\eps$ that uses less than $\Theta(T \log n)$ communication rounds, which would be a contradiction. Without loss of generality, enhance $\mathcal{A}$ so that each node knows the total weight of final CDS, note that this can be done in additional $\mathcal{O}(D)$ rounds and is thus without loss of generality.

\shortOnly{
\begin{figure}[b!]
	\centering
		\includegraphics[width=0.75\textwidth]{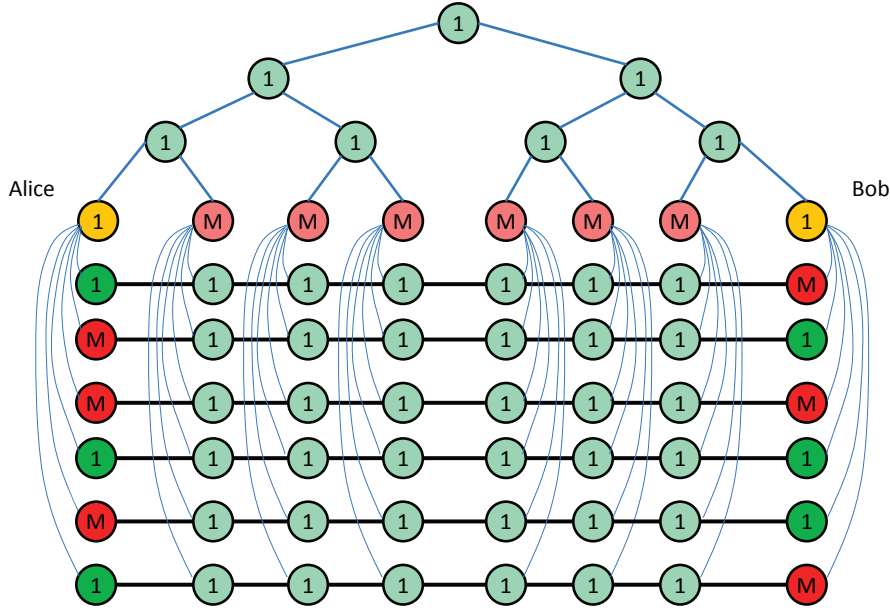}
	\caption{{\scriptsize The round complexity lower bound graph}}
	\label{fig:LB}
\end{figure}
}
\fullOnly{
\begin{figure}[h!]
	\centering
		\includegraphics[width=0.75\textwidth]{LB.eps}
	\caption{{\scriptsize The round complexity lower bound graph}}
	\label{fig:LB}
\end{figure}
}
Consider a graph made of three parts: $\sqrt{n}$ aligned parallel paths of length $\sqrt{n}$ each, a tree of depth $\log n -1$ on top of these trees such that each of its leaves is aligned with one column of the nodes of the paths, and finally, for each leaf of the tree, edges from this leaf to all the nodes in the paths that are in the same column. \Cref{fig:LB} shows an example.

Call the left-most leaf Alice and the rightmost leaf Bob and that they are given an instance of set-disjointness, where Alice and Bob respectively get inputs $\mathcal{X}$ and $\mathcal{Y}$ that are subsets of $\{1, 2, \dots, \sqrt{n}\}$. We next describe how to encode these inputs in the weights of the \MCDS problem. Give a weight of $1$ to each non-leaf node of the tree, the nodes held by Alice and Bob, and also all nodes on the paths except the leftmost and the rightmost ones on each path. These weight-$1$ nodes are indicated with a light green color in \Cref{fig:LB}. Then, for each other leaf node, give a weight $M=\alpha(n) \cdot n+1$ (light red color). Finally, for each $i\in \{1, \dots, \sqrt{n}\}$, for the leftmost node of the $i^{th}$ path, give a weight of $1$ if $i\notin\mathcal{X}$ (dark green color) and give a weight $M$ if $i\in \mathcal{X}$ (dark red). Similarly, for the rightmost node of the $i^{th}$ path, give a weight of $1$ if $i\notin\mathcal{Y}$ and give a weight $M$ if $i\in \mathcal{Y}$. Note that Alice and Bob can indeed put these weights as inputs to the \MCDS problem in just one round, as all the nodes know the fixed part of the weight, and the variable part which depends on the set disjointness inputs is on the neighbors of Alice and Bob, and thus, they can lean their weights in just one round.

Now notice that if $i\in \mathcal{X}\cap \mathcal{Y}$, then any CDS must contain at least one node of weight $M$. On the other hand, if $\mathcal{X}$ and $\mathcal{Y}$ are disjoint, then there is an CDS with weight (less than) $n$, which includes all weight-$1$ nodes. Since $\frac{M}{n}>\alpha$, and as $\mathcal{A}$ finds an $\alpha(n)$-approximation of \MCDS, the weight output by $\mathcal{A}$ lets the nodes distinguish the case where the sets are disjoint from the case where they are not (with error-probability being the same as in $\mathcal{A}$). The final piece, which is the key technical part, is to show that Alice and Bob can indeed simulate $\mathcal{A}$ being run over the whole graph, using only $\Theta(T\log n)$ communication rounds between themselves. This follows exactly from \cite[Simulation Theorem]{DasSarma-11}.

}

\end{document}